\newcites{supp}{Appendix References}
\newcounter{claimcounter}
\newenvironment{smallclaim}[1]
{
  \refstepcounter{claimcounter}
    \noindent\textbf{Claim \theclaimcounter}: #1 \\
    \noindent\textit{Proof for Claim \theclaimcounter}:
}
{
  ~\qed
}
\newcounter{partcounter}
\newcommand{\myprefix}{}
\newcommand{\mylabel}[1]{\label{\myprefix#1}}
\newcommand{\myref}[1]{\ref{\myprefix#1}}
\newcommand{\myeqref}[1]{\eqref{\myprefix#1}}
\newcommand{\setprefix}[1]{\renewcommand{\myprefix}{#1}}
\newcommand{\mypartprefix}{%
  \ifnum\value{partcounter}>0 
    \Alph{partcounter}%
  \fi
}
\newcommand{\mypart}{
  \stepcounter{partcounter}  
  \setcounter{section}{0}
  \setcounter{theorem}{0}
  \setcounter{lemma}{0}
  \setcounter{figure}{0}
  \setcounter{algorithm}{0}
}
\title{Optimal Delivery with a Faulty Drone}
\author{Jared {Coleman}}{University of Southern California, USA \and \url{https://jaredraycoleman.com}}{jaredcol@usc.edu}{https://orcid.org/0000-0003-1227-2962}{}
\author{Evangelos {Kranakis}}{Carleton University, Canada \and \url{https://people.scs.carleton.ca/~kranakis/}}{kranakis@scs.carleton.ca}{https://orcid.org/0000-0002-8959-4428}{}
\author{Danny {Krizanc}}{Wesleyan University, USA \and \url{https://dkrizanc.web.wesleyan.edu/}}{dkrizanc@wesleyan.edu}{}{}
\author{Oscar {Morales-Ponce}}{California State University, Long Beach, USA \and \url{https://www.csulb.edu/college-of-engineering/dr-oscar-morales-ponce}}{https://www.csulb.edu/college-of-engineering/dr-oscar-morales-ponce}{https://orcid.org/0000-0002-9645-1257}{}
\authorrunning{J. Coleman and E. Kranakis and D. Krizanc and O. Morales-Ponce} 
\keywords{delivery, drone, search theory, competitive ratio, online algorithm} 
\begin{document}
\maketitle

\begin{abstract}
    We introduce and study a new cooperative delivery problem inspired by drone-assisted package delivery. We consider a scenario where a drone, en route to deliver a package to a destination (a point on the plane), unexpectedly loses communication with its central command station. The command station cannot know whether the drone's system has wholly malfunctioned or merely experienced a communications failure. Consequently, a second, helper drone must be deployed to retrieve the package to ensure successful delivery. The central question of this study is to find the optimal trajectory for this second drone. We demonstrate that the optimal solution relies heavily on the relative spatial positioning of the command station, the destination point, and the last known location of the disconnected drone. 
\end{abstract}

\mypart
\setprefix{}
\includecomment{appendixonly}
\excludecomment{mainonly}
\section{Introduction}\mylabel{sec:intro}
In recent years, drone-based package delivery has emerged as a promising application of unmanned aerial vehicle (UAV) technology. As these systems are increasingly integrated into supply chain infrastructures, it becomes imperative to design algorithms for their robust operation amidst unexpected complications. In this paper, we consider a complication arising from faulty communication and propose a solution for a cooperative delivery problem inspired by this scenario.

Consider a situation where a drone, en route to deliver a package to a given destination, unexpectedly loses communication with its central command station. This unexpected loss of contact leaves the command station uncertain of whether the drone has suffered a communications breakdown or complete system failure. Furthermore, even if the issue \textit{is} only with the communications, the command station no longer has any way of knowing if/where the drone will fail on the rest of its way to the destination. In order to guarantee the package gets delivered, the command station must dispatch a second helper drone to retrieve the package and complete the delivery. Our goal is to design an online algorithm (one that cannot anticipate the true fail location of the drone) that, given the drone's last known location, determines the best trajectory for the second drone to find the package and complete the delivery in minimal time.

Formally, let us denote the last known location of the drone as the origin $S=(0, 0)$, the destination as the point $T=(1, 0)$, and the location of the command station as $P=(x, y)$, where $y \geq 0$ (all without loss of generality). The task is to identify the optimal trajectory for the second drone that minimizes the competitive ratio when compared to an optimal offline algorithm that knows the exact failure location $(t, 0)$ of the first drone in advance.
We assume the first drone will fail at some time $0 \leq t \leq 1$ (if it does not fail the delivery time is optimal and the problem is uninteresting).
Also, to simplify notation, we only consider the package to be delivered once the \textit{second} drone and the package are co-located at the destination (i.e., if $t=1$ and the first drone fails \textit{at} the destination, the package is only delivered once the second drone reaches $T$). 
For an algorithm $\mathcal{A}(x,y)$, which defines the trajectory of the second drone, let $A(x,y,t)$ denote the delivery time of algorithm $\mathcal{A}(x,y)$ for failure time $t$.
Our goal is to find an online algorithm (where $t$ is unknown) with minimum competitive ratio with respect to the delivery time of an optimal offline algorithm, Opt$(x,y,t)$ (where $t$ is known ahead of time).
The competitive ratio for algorithm $\mathcal{A}(x,y)$ for a given fail time $t$ can be written as
\begin{align*}
    \text{CR}_{\mathcal{A}(x,y)}(t) = \frac{A(x,y,t)}{\text{Opt}(x,y,t)} .
\end{align*}
Then the competitive ratio of $\mathcal{A}(x,y)$ is
\begin{align*}
    \text{CR}_{\mathcal{A}(x,y)} = \sup_t \text{CR}_{\mathcal{A}(x,y)}(t) .
\end{align*}
In order to simplify notation we sometimes eliminate $x$ and $y$ (when they are clear from context) and write algorithm $\mathcal{A}(x,y)$ as $\mathcal{A}$ and its delivery time $A(x,y,t)$ as $A(t)$.

\subsection{Model and Notation}\mylabel{sec:model}
In this section, we describe the model and notation used throughout the paper.
We call the first drone (the drone that is initially carrying the package towards the destination) the \textit{starter} and the second drone, which completes the delivery, the \textit{finisher}.
Without loss of generality, let $S = (0, 0)$ be the initial location of the starter (the point where it loses communication with the central command station) and $T = (1, 0)$ be the destination point on the plane. 
The starter drone begins at point $S$ carrying the package and moves following a straight line directly towards point $T$. 
At some unknown time $t \leq 1$, the starter will fail at position $(t, 0)$.

The finisher drone starts at a point $P=(x, y)$ on the plane (i.e., at the command station).
Without loss of generality, we assume that $y \geq 0$ (all results follow trivially by symmetry).
We assume that both drones have a speed of $1$ and always move at this speed.
The finisher can start, stop, and change direction instantaneously.
The drones can communicate with each other and exchange the package only when they are co-located (face-to-face communication). 
The package is considered to be delivered as soon as the finisher and the package are co-located at $T$.

Let $D(c, r)$ ($\overline{D}(c,r)$) denote the open (closed) disk with radius $r$ centered at point $(c,0)$. 
We use capital letters to denote points, $|PQ|$ to denote the Euclidean distance between two points $P$ and $Q$, and $\overline{PQ}$ to denote the line segment with endpoints $P$ and $Q$.

\subsection{Related Work}\mylabel{sec:related}


An important aspect of our problem is that the starter agent experiences a failure. Many problems with cooperative mobile agent experiencing failures have been studied for a variety of basic problems in distributed computing and in various domains.
In \cite{faulty_search}, the authors study the search problem on the line by $n$ mobile agents where $f$ of the agents are faulty.
They present algorithms and their competitive ratios for different values of $f$. 
In \cite{DBLP:conf/algosensors/GeorgiouKLPP19} the authors study the evacuation problem on the disk by $n$ agents, $f$ of which may be faulty.  Competitive algorithms for gathering have been proposed for $n$ agents in a synchronous system with less than $(n-1)/3$ failures~\cite{faulty_gathering}. Optimal algorithms and hardness results for the multi-agent patrolling problem with faulty agents were presented in~\cite{faulty_patrolling}.
Algorithms for flocking~\cite{faulty_flocking} and evacuation~\cite{faulty_evacuation} have also been studied in the context of faulty agents.

The problem studied in this paper has both search and delivery components. Many search problems have been studied for different domains and under different models and assumptions (cf. the book~\cite{search_gal_survey}).
The authors of \cite{del_pony_express} consider the delivery problem for messages on the line segment by multiple agents with different speeds and propose competitively optimal algorithms.
The results of the previous study were also extended to the plane~\cite{del_coleman_plane}.
Joint search and delivery problems, however, have received much less attention in the literature.
The problem has also been studied for the single- and two-agent case on the line~\cite{sd_coleman_line}. 
On the systems research side, many solutions for drone-assisted package delivery have been proposed for different environments and under different assumptions (we refer the reader to the survey~\cite{drone_survey}).

Most related but different to our delivery problem are the papers \cite{del_carvalho_graph} and \cite{del_drone_graph} on package delivery. In the former, the authors investigate delivery of one or two packages of many autonomous mobile agents initially located on distinct 
nodes of a weighted graph. In the latter paper, they are concerned with delivering a package from a source node to a destination node in a graph using a set of drones and study the setting where the movements of each drone are restricted to a certain subgraph of the given graph. Note that both papers above address the delivery problem in a graph setting. To the best of our knowledge our paper is the first to address delivery of a package in the plane in the presence of a faulty drone.

\subsection{Results}
The main result we present is an optimal online algorithm that depends only on the starting position of the finisher (the central command station).
Essentially, the algorithm executes one of three candidate algorithms depending on the finisher's starting position.
This is depicted in Figure~\myref{fig:regions_simple}.
\begin{figure}[!ht]
    \centering
    \includegraphics[width=0.5\linewidth]{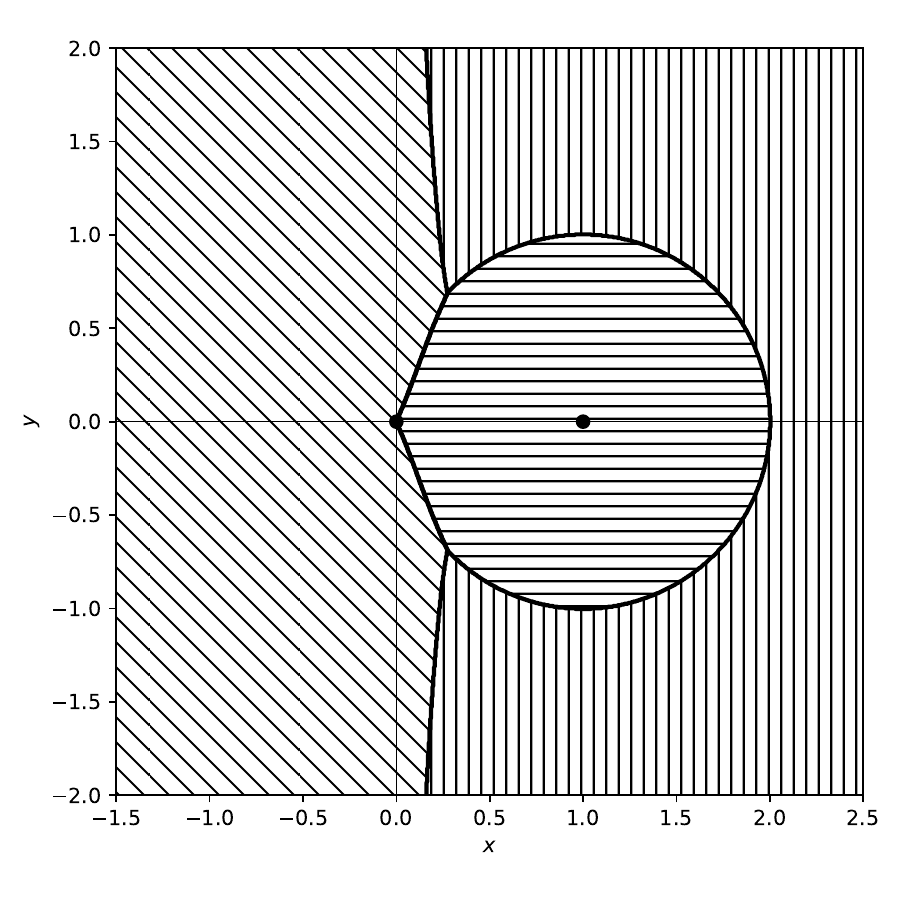}
    \caption{The optimal algorithm depends on the starting position of the finisher. The striped regions depict the finisher starting positions for which each of the three candidate algorithms is optimal.}
    \mylabel{fig:regions_simple}
\end{figure}
If the finisher starts in the diagonally striped region,
then the optimal algorithm is for the finisher to go to $S$ (the origin) and then towards the destination until finding the failed starter with the object.
If the finisher starts in the vertically striped region, the optimal algorithm is for the finisher to go first to $T$ (the destination) and then toward $S$ until finding the failed starter with the object.
Finally, if the finisher starts in the horizontally striped region, then the optimal algorithm is for the starter to go to the point $M = \left( \left(x^2+y^2\right)/(2x), 0 \right)$ and then toward $S$ until finding the failed starter with the object.
We will show that the finisher is guaranteed to find the starter on the interval $[0, M]$.

The rest of the paper is organized as follows.
In Section~\myref{sec:main}, we detail the three candidate algorithms and derive each of their competitive ratios as a function of the command station's starting position.
Then, in Section~\myref{sec:hybrid}, we present a hybrid algorithm that chooses the best of the three candidate algorithms to run based on the command station's starting position $(x,y)$.
We prove this algorithm is optimal, and then (in Section~\myref{sec:discussion}) discuss how the command station's position affects which of the candidate algorithms is executed in the optimal hybrid algorithm.
We conclude the paper in Section~\myref{sec:conclusion} with a summary of the results and a discussion of future directions.
\begin{mainonly}
All proofs omitted due to space constraints can be found in the full version of the paper in the appendix.
\end{mainonly}

\section{Candidate Algorithms}\mylabel{sec:main}
In this section, we present three algorithms and derive their competitive ratios.
In order to do so, we must first consider the optimal offline algorithm, where the fail location $(t, 0)$ (at time $t$) is known ahead of time and can be used to compute the optimal trajectory for the finisher.
Clearly in this case the finisher should go directly from its starting location to the starter's fail location $(t,0)$ and then complete the delivery.
The delivery time then can be written:
\begin{align*}
    \text{Opt}(t) = \max\left\{1, \sqrt{(x-t)^2+y^2} + 1 - t \right\} .
\end{align*}
Indeed, although the finisher moves directly to $(t, 0)$, it may have to wait for the starter to arrive (if necessary) prior to completing the delivery task.

For the online algorithms, we start by reasoning about what an optimal algorithm looks like.
First, since the starter must fail at some point on the line segment $\overline{ST}$, the finisher must \emph{eventually} move from its initial location to some point $(m, 0)$ on the segment (otherwise it will never find the starter with the package).
We use this to prove the following intuitively obvious lemma:

\begin{lemma}\mylabel{lm:alg_class}
    There exists an online algorithm with optimal competitive ratio that involves the finisher moving from its initial position $P=(x,y)$ directly to a point $M=(m, 0) \in \overline{ST}$, past which, it remains within the line segment $\overline{ST}$.
\end{lemma}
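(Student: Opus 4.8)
The plan is to begin with an arbitrary online algorithm $\mathcal{A}$ having optimal competitive ratio and to transform it, in two stages, into one of the claimed form while never increasing its competitive ratio; since $\mathcal{A}$ was optimal, the transformed algorithm is optimal as well. The single fact driving both stages is that the starter is \emph{always} located on $\overline{ST}$: before failing it travels along $\overline{ST}$ at unit speed, and after failing it rests at $(t,0)\in\overline{ST}$. Writing $\sigma(\tau)=(\min\{\tau,t\},0)$ for its position at time $\tau$, the finisher can exchange the package only at a time $\tau$ with finisher-position equal to $\sigma(\tau)$, and every such co-location point lies on $\overline{ST}$.

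\emph{Stage 1 (direct approach).} Let $M=(m,0)$ be the first point of $\overline{ST}$ that $\mathcal{A}$'s finisher visits, reached at some time $\tau_0\ge |PM|$. Before time $\tau_0$ the finisher is off the segment, so by the observation above it cannot yet have met the starter; hence nothing in the pre-$\tau_0$ portion of the trajectory affects the delivery time except the pair $(M,\tau_0)$. I would replace this portion by a straight-line move from $P$ to $M$ (arriving at time $|PM|\le\tau_0$) followed, if necessary, by waiting at $M$ until $\tau_0$, leaving the rest of the trajectory unchanged. The resulting algorithm meets the starter at exactly the same times and positions as $\mathcal{A}$ for every $t$, so its delivery time, and hence its competitive ratio, is unchanged, and it now approaches $\overline{ST}$ along a straight line to $M$ as required.

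\emph{Stage 2 (remaining in the segment).} It remains to modify the post-$M$ trajectory $\gamma$ so that it stays in $\overline{ST}$. First I would note that one may assume that once the finisher holds the package it goes straight to $T$: replacing the tail of any trajectory, from the pickup point onward, by the straight segment to $T$ (which lies in $\overline{ST}$, both endpoints being on it) only decreases the delivery time. Thus the delivery time is governed by the \emph{finding time} $\tau_1=\inf\{\tau:\ \gamma(\tau)=\sigma(\tau)\}$ and equals $f(\tau_1)$, where $f(\tau)=\tau+\bigl(1-\min\{\tau,t\}\bigr)$ is the finding time plus the distance from the pickup point $\sigma(\tau_1)$ to $T$. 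Now let $\pi$ be the ($1$-Lipschitz) metric projection onto the convex set $\overline{ST}$, and replace $\gamma$ by its projection $\tilde\gamma=\pi\circ\gamma$. Then $\tilde\gamma$ is a valid trajectory (speed $\le 1$), starts at $\pi(M)=M$, and stays in $\overline{ST}$; and because $\sigma(\tau)\in\overline{ST}$ we have $\gamma(\tau)=\sigma(\tau)\Rightarrow\tilde\gamma(\tau)=\pi(\sigma(\tau))=\sigma(\tau)$, so the projected finding time satisfies $\tilde\tau_1\le\tau_1$. Since $f$ is non-decreasing (constant on $[0,t]$, increasing on $[t,\infty)$), the projected delivery time $f(\tilde\tau_1)\le f(\tau_1)$ for every failure time $t$, and the competitive ratio does not increase.

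I expect the crux to be the projection step of Stage 2, which must be justified carefully on three points: that $\pi\circ\gamma$ is a legal trajectory (guaranteed by nonexpansiveness of projection onto a convex set together with the speed-$\le 1$ assumption that permits waiting), that projecting never destroys a co-location event (guaranteed because the starter already lies on $\overline{ST}$, so its projection is itself), and that the quantity actually controlling the delivery time, $f(\tau_1)$, is monotone in the finding time. Everything else is bookkeeping. Composing the two stages yields an algorithm that goes straight from $P$ to $M\in\overline{ST}$ and thereafter never leaves $\overline{ST}$, with competitive ratio no larger than that of the optimal $\mathcal{A}$, which proves the lemma.
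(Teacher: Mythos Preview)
Your argument is correct and rests on the same idea as the paper's: off-segment motion cannot help because the starter is always on $\overline{ST}$, so one may replace it by on-segment motion without delaying pickup. The paper executes this with two one-line triangle-inequality observations (straightening the approach to $M$, and straightening any single off-segment detour $M\to P_2\to M_1$ to $M\to M_1$), whereas you give a global version by projecting the entire post-$M$ trajectory onto $\overline{ST}$ and invoking nonexpansiveness of the metric projection; your route handles arbitrary curved trajectories more transparently and makes the monotonicity of the delivery time $f$ in the finding time explicit, which the paper leaves implicit. One small slip: in Stage~1 the modified algorithm may meet the starter \emph{earlier} than $\mathcal{A}$ (if the starter passes through $M$ while the finisher is waiting there), so delivery times need not be literally ``unchanged'' as you write---only not increased---but this only strengthens the conclusion, so the lemma still follows.
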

\begin{proof}
    Let $M=(m,0)$ denote the first point the finisher reaches on $\overline{ST}$.
    We first show that the finisher should move \emph{directly} to $M$.
    Consider, for the sake of contradiction, an algorithm that involves the finisher moving from its initial position $P=(x,y)$ to a point $P_1 = (x_1,y_1) \notin \overline{PM}$ and then to point $M=(m,0)$ before finding the starter and delivering the package to the destination.
    Clearly, since the finisher could not have found the starter at any point not in $\overline{ST}$, and $|P P_1| + |P_1 M| > |PM|$ (triangle inequality), then the algorithm where the finisher moves directly to $M$ delivers the package at least as fast.

    Now we show that the finisher should stay on the line segment $\overline{ST}$ after reaching it for the first time.
    Again, for the sake of contradiction, consider an algorithm which involves the finisher moving from $M=(m,0) \in \overline{ST}$ to a point $P_2 = (x_2,y_2) \notin \overline{ST}$ and then back to a point $M_1=(m_1,0) \in \overline{ST}$ before finding the starter and delivering the package to the destination.
    Since the finisher could not have found the failed starter at any point not in $\overline{ST}$, and $|M P_2| + |P_2 M_1| > |M M_1|$ (triangle inequality), then the algorithm where the finisher moves directly from $M$ to $M_1$ (never leaving $\overline{ST}$) delivers the package at least as fast.
\end{proof}

Now, we present three candidate online algorithms:
\begin{enumerate}
    \item \textbf{$\mathcal{A}_0$} (\textbf{Go To Last Point of Contact}): The finisher moves to the origin $S=(0,0)$ and then towards the destination until it finds the failed starter (trajectory $P \rightarrow S \rightarrow T$).
    \item \textbf{$\mathcal{A}_1$} (\textbf{Go To Destination}): The finisher goes first to the destination $T=(1,0)$ and then towards the origin until it finds the failed starter (trajectory $P \rightarrow T \rightarrow S \rightarrow T$).
    \item \textbf{$\mathcal{A}_d$} (\textbf{Meet in the Middle}): The finisher goes first to the point $(d, 0)$, where $d=(x^2+y^2)/(2x)$, and then towards the origin until it finds the failed starter (trajectory $P \rightarrow (d, 0) \rightarrow S \rightarrow T$). The point $(d,0)$ is the unique point on the line segment where the drones, both moving continuously, would meet simultaneously if the starter does not fail before time $d$.
\end{enumerate}

\begin{lemma}\mylabel{lm:Ad_vs_AT}
    $\text{CR}_{\mathcal{A}_d(x,y)} \leq \text{CR}_{\mathcal{A}_1(x,y)}$ if and only if the finisher starts within distance $1$ of the destination (i.e., $(x, y) \in \overline{D}(1, 1)$).
\end{lemma}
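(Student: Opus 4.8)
The plan is to avoid computing either competitive ratio explicitly and instead to compare the two delivery-time functions pointwise, exploiting that they share the same offline-optimal denominator. First I would record closed forms for the two delivery times. Writing $a = |PT| = \sqrt{(x-1)^2+y^2}$ and $d = (x^2+y^2)/(2x)$, a direct case analysis on whether the starter has already failed by the time the finisher reaches its turning point yields
\[ A_d(t) = \max\{1,\, 2d + 1 - 2t\}, \qquad A_1(t) = \max\{1,\, a + 2 - 2t\}, \]
for all $t \in [0,1]$; for $A_d$ one invokes the defining property of $d$, namely that the starter and finisher meet at $(d,0)$ at time $d$, so delivery is optimal once $t \ge d$. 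The crucial structural observation is that both functions are the pointwise maximum of $1$ with a line of slope $-2$, so the two lines are \emph{parallel} and differ only in their intercepts $2d+1$ and $a+2$.

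Because the lines are parallel, one delivery time dominates the other everywhere at once: $A_d(t) \le A_1(t)$ for all $t$ iff $2d+1 \le a+2$, i.e.\ iff $2d \le a+1$, and the reverse strict domination holds iff $2d > a+1$. Since $\text{CR}_{\mathcal{A}_d}$ and $\text{CR}_{\mathcal{A}_1}$ are suprema over the compact set $t\in[0,1]$ of $A_d(t)/\mathrm{Opt}(t)$ and $A_1(t)/\mathrm{Opt}(t)$ with the same positive denominator, pointwise domination transfers directly to the ratios. Thus $2d \le a+1$ gives $\text{CR}_{\mathcal{A}_d} \le \text{CR}_{\mathcal{A}_1}$; and when $2d > a+1$, evaluating at a maximizer $t^\star$ of $A_1/\mathrm{Opt}$ (which exists by compactness) gives $\text{CR}_{\mathcal{A}_d} \ge A_d(t^\star)/\mathrm{Opt}(t^\star) > A_1(t^\star)/\mathrm{Opt}(t^\star) = \text{CR}_{\mathcal{A}_1}$, a strict inequality.

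The lemma then reduces to the purely geometric equivalence $2d \le a+1 \iff (x,y)\in\overline{D}(1,1)$. I would prove this by noting that $(x,y)\in\overline{D}(1,1) \iff d \le 1$ (since $(x-1)^2+y^2 \le 1 \iff x^2+y^2 \le 2x$), and then tracking the sign of $a+1-2d$. After clearing $x>0$, the comparison $2d \le 1+a$ becomes $x^2+y^2-x \le x\,a$; squaring (the case $d \le \tfrac12$ makes the left side nonpositive and is trivial) and simplifying, the relevant difference factors cleanly as $x^2 a^2 - (x^2+y^2-x)^2 = 2xy^2(1-d)$, whose sign is exactly that of $1-d$ when $y>0$. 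This yields $2d \le 1+a \iff d \le 1$, closing both directions.

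The main obstacle I anticipate is not the algebra but the degenerate axis $y=0$: there the factor $2xy^2(1-d)$ vanishes, so $2d = 1+a$ identically for $x \ge 1$, forcing $A_d \equiv A_1$ and hence $\text{CR}_{\mathcal{A}_d}=\text{CR}_{\mathcal{A}_1}$ even when $d>1$ (i.e.\ $x>2$). On the ray $\{(x,0): x>2\}$ the inequality $\text{CR}_{\mathcal{A}_d}\le\text{CR}_{\mathcal{A}_1}$ therefore holds with equality although the point lies outside $\overline{D}(1,1)$, so the strict ``only if'' direction genuinely needs $y>0$; I would either state the lemma for $y>0$ or observe that on the line both trajectories are already offline-optimal, making the tie immaterial to the hybrid algorithm. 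The only other point demanding care is justifying the closed forms for $A_d$ and $A_1$ in the regime where the finisher meets a still-moving starter, which is precisely the source of the $\max\{1,\cdot\}$ truncation.
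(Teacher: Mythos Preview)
Your argument is essentially the paper's: both compare $A_d(t)$ and $A_1(t)$ pointwise over the shared denominator $\mathrm{Opt}(t)$, reducing the lemma to the single inequality $\sqrt{(x-d)^2+y^2}+d\lessgtr\sqrt{(x-1)^2+y^2}+1$. The paper leaves this to the (implicit) monotonicity of $u\mapsto|P-(u,0)|+u$, whereas you first invoke the defining identity $\sqrt{(x-d)^2+y^2}=d$ to rewrite it as $2d\lessgtr a+1$ and then establish the equivalence with $d\lessgtr 1$ via the factorization $x^2a^2-(x^2+y^2-x)^2=y^2(2x-x^2-y^2)$; the two executions are different bookkeeping for the same comparison.

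Your caveat about $y=0$ is correct and in fact sharper than the paper. On the ray $y=0,\ x>2$ the two trajectories coincide once they enter $\overline{ST}$ (both reach $(1,0)$ at time $x-1$ and proceed identically), so $\mathrm{CR}_{\mathcal{A}_d}=\mathrm{CR}_{\mathcal{A}_1}=1$ and the strict ``only if'' direction genuinely fails there; the paper's proof asserts $A_d(t)>A_1(t)$ in this case without flagging the collinear degeneracy where the triangle inequality is an equality.
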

\begin{proof}
    First, we show that if the finisher's starting position $(x,y)$ is within a distance $1$ of the destination (i.e., $(x,y) \in \overline{D}(1,1)$ or equivalently $\sqrt{(x-1)^2+y^2} \leq 1$), then $\text{CR}_{\mathcal{A}_d}(t) \leq \text{CR}_{\mathcal{A}_1}(t)$ for any fail time $t$.
    Observe that if the finisher is executing algorithm $\mathcal{A}_d$ and $t \geq d$, then the starter fails only after the finisher encounters it at $(d,0)$, and so the algorithm is optimal.
    Thus, the competitive ratio for $\mathcal{A}_d$ can be written:
    \begin{align*}
        \text{CR}_{\mathcal{A}_d}(t) = \frac{A_d(t)}{\text{Opt}(t)} =
        \begin{cases}
            1 & \text{if } t \geq d \\
            \frac{\sqrt{(x-d)^2+y^2}+(d-t)+(1-t)}{\sqrt{(t-x)^2+y^2}-t+1} & \text{otherwise.}
        \end{cases} 
    \end{align*}
    Observe the max term is removed from $\text{Opt}(t)$ since
    \begin{align*}
        t &< d = \frac{x^2+y^2}{2x} \\
        2xt &< x^2 + y^2 \\ 
        t^2 &< x^2 - 2xt + t^2 + y^2 \\ 
        1 &< \sqrt{x^2 - 2xt + t^2 + y^2} - t + 1 .
    \end{align*}
    If $t \geq d$, then the statement is trivially true since $\mathcal{A}_d$ is optimal.
    Otherwise, since $(x, y) \in \overline{D}(1, 1)$, then $d = (x^2+y^2)/(2x) \leq 1$ and thus:
    \begin{align*}
        A_d(t) &= \sqrt{(x-d)^2+y^2} + d + 1 - 2t \leq \sqrt{(x-1)^2+y^2} + 2 - 2t = A_1(t).
    \end{align*}
    Now, we show that if the finisher starts a distance greater than $1$ from the destination, then $A_d(t) > A_1(t)$ for any fail time $t$ (and thus $\text{CR}_{\mathcal{A}_d} > \text{CR}_{\mathcal{A}_1}$).
    This follows from the fact that $d = (x^2+y^2)/(2x) > 1$ and thus:
    \begin{align*}
        A_d(t) &= \sqrt{(x-d)^2+y^2} + d + 1 - 2t > \sqrt{(x-1)^2+y^2} + 2 - 2t = A_1(t).
    \end{align*}%
\end{proof}

Lemma~\myref{lm:Ad_vs_AT} essentially tells us that we need only consider (among the candidate algorithms) $\mathcal{A}_0$ and $\mathcal{A}_d$ when the finisher starts \textit{inside} the disk and algorithms $\mathcal{A}_0$ and $\mathcal{A}_1$ when the finisher starts \textit{outside} the disk.
Note that, when $(x,y)$ is on the edge of the disk $\overline{D}(1,1)$ (i.e., $(x-1)^2+y^2=1$), then $d=1$ and so $\mathcal{A}_d$ and $\mathcal{A}_1$ are the same algorithm.

\subsection{Go to Last Point of Contact}
In this section, we derive the competitive ratio of Algorithm $\mathcal{A}_0$. 
Observe that, no matter the failure time of the starter, the algorithm takes time $1+\sqrt{x^2+y^2}$ to deliver the message.
This fact makes deriving the algorithm's competitive ratio rather simple.
\begin{theorem}
    \mylabel{thm:CR:A0}
    The competitive ratio of $\mathcal{A}_0$ is
    \begin{align*}
        \text{CR}_{\mathcal{A}_0} = \frac{1+\sqrt{x^2+y^2}}{\max\left\{1, \sqrt{(x-1)^2+y^2}\right\}} .
    \end{align*}
\end{theorem}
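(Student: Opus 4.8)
The plan is to use the observation, recorded just before the statement, that the delivery time of $\mathcal{A}_0$ does not depend on the failure time $t$. First I would verify this claim explicitly by tracing the trajectory $P \to S \to T$: the finisher reaches $S=(0,0)$ at time $\sqrt{x^2+y^2}$, then travels along $\overline{ST}$ at unit speed and encounters the (now stationary) failed starter at $(t,0)$ at time $\sqrt{x^2+y^2}+t$, since the starter — moving at the same speed and always ahead while alive — can only be caught once it has stopped. Picking up the package there and continuing the remaining distance $1-t$ to $T$ yields a total delivery time $A_0(t) = \sqrt{x^2+y^2} + t + (1-t) = 1 + \sqrt{x^2+y^2}$, independent of $t$.

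Because the numerator $A_0(t)$ is the constant $1+\sqrt{x^2+y^2}$, the competitive ratio simplifies to
\[
\text{CR}_{\mathcal{A}_0} = \sup_{t \in [0,1]} \frac{A_0(t)}{\text{Opt}(t)} = \frac{1+\sqrt{x^2+y^2}}{\min_{t\in[0,1]}\text{Opt}(t)},
\]
so the entire problem reduces to minimizing the offline optimum $\text{Opt}(t) = \max\{1, f(t)\}$, where $f(t) = \sqrt{(x-t)^2+y^2} + 1 - t$, over $t \in [0,1]$.

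The key step is to show that this minimum is attained at $t=1$ and equals $\max\{1, \sqrt{(x-1)^2+y^2}\}$. I would establish that $f$ is nonincreasing on $[0,1]$ via its derivative $f'(t) = \frac{t-x}{\sqrt{(x-t)^2+y^2}} - 1 \le 0$, the inequality holding because $\frac{t-x}{\sqrt{(x-t)^2+y^2}} \le 1$ always. Since the constant function $1$ and the nonincreasing function $f$ are both nonincreasing, their pointwise maximum $\text{Opt}(t)$ is nonincreasing as well, and is therefore minimized at the right endpoint $t=1$, where $f(1) = \sqrt{(x-1)^2+y^2}$. Substituting $\min_{t}\text{Opt}(t) = \max\{1, \sqrt{(x-1)^2+y^2}\}$ into the displayed ratio gives the claimed formula.

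The only subtle point — and the one I would be most careful about — is the handling of the $\max$ in $\text{Opt}(t)$: one must argue not merely that $f$ is minimized at $t=1$ but that the capped quantity $\max\{1,f(t)\}$ is, which follows immediately once both arguments of the max are seen to be nonincreasing. Everything else is a short monotonicity calculation.
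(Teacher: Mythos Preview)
Your proposal is correct and follows essentially the same approach as the paper: exploit that $A_0(t)$ is constant, reduce to minimizing $\text{Opt}(t)$, observe the denominator is nonincreasing in $t$, and evaluate at $t=1$. You supply more detail (the explicit derivative of $f$ and the observation that the pointwise max of nonincreasing functions is nonincreasing), but the argument is the same.
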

\begin{proof}
    Recall that the competitive ratio is given by the formula
    \begin{align*}
        \sup_{0 \leq t \leq 1} 
        \frac{1+\sqrt{x^2+y^2}}{\max\left\{1, \sqrt{(x-t)^2+y^2}+1-t\right\}}.
    \end{align*}
    Observe the denominator is non-increasing with respect to $t$ (on $t \in [0, 1]$) and attains a minimum at $t=1$.
    The competitive ratio, then is:
    \begin{align*}
        \frac{1+\sqrt{x^2+y^2}}{1+\max\left\{0, \sqrt{(x-1)^2+y^2}-1\right\}}
        &= \frac{1+\sqrt{x^2+y^2}}{\max\left\{1, \sqrt{(x-1)^2+y^2}\right\}}.
    \end{align*}
\end{proof}

\subsection{Go to Destination}
We will now derive the competitive ratio of algorithm $\mathcal{A}_1$.
Unlike the delivery time of algorithm $\mathcal{A}_0$, the delivery time $A_1(t)=\sqrt{(x-1)^2+y^2}+2(1-t)$ (the time for the agent to go to $(1,0)$ and then backtrack until it finds the package at $(t,0)$ before returning to $(1,0)$ to complete the delivery) depends greatly on the fail time of the starter.
Thus, to find the competitive ratio of algorithm $\mathcal{A}_1$, we must find the worst-case fail time for a given finisher starting position $(x,y)$.
\begin{theorem}
    \mylabel{thm:CR:A1}
    Assume $(x,y) \not\in D(1,1)$ (i.e., $\sqrt{(x-1)^2+y^2} \geq 1$).
    The competitive ratio of $\mathcal{A}_1$ is \[\text{CR}_{\mathcal{A}_1}(\max\left\{t_1, 0\right\})\] where
    \begin{align*}
        t_1 = \begin{cases}
            1 - 3y/4 &\text{if }x=1 \\
            \frac{x^2 + y^2 + z_1(1-x) - 1 - z_1 \sqrt{x (x+z_1-2)+y^2-z_1+1}}{2 (x-1)} &\text{otherwise}
        \end{cases}
    \end{align*}
    and $z_1 = \sqrt{(x-1)^2+y^2}$.
\end{theorem}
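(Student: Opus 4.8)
The plan is to reduce the supremum $\sup_{0 \le t \le 1} \text{CR}_{\mathcal{A}_1}(t)$ to a single-variable calculus problem and then locate its maximizer explicitly. First I would use the hypothesis $(x,y) \notin D(1,1)$ to simplify $\text{Opt}(t)$. As already observed in the proof of Lemma~\myref{lm:Ad_vs_AT}, the condition $z_1 \ge 1$ is equivalent to $d = (x^2+y^2)/(2x) \ge 1$, so that $t \le 1 \le d$ throughout the admissible range $t \in [0,1]$; consequently the maximum defining $\text{Opt}(t)$ is always resolved in favor of its second argument, giving $\text{Opt}(t) = \sqrt{(x-t)^2+y^2} + 1 - t$. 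Writing $r(t) = \sqrt{(x-t)^2+y^2}$, the objective becomes the explicit smooth ratio
\[
    f(t) := \text{CR}_{\mathcal{A}_1}(t) = \frac{z_1 + 2(1-t)}{r(t) + 1 - t},
\]
which satisfies $f(1) = 1$, while $f(t) > 1$ for $t < 1$ by the triangle inequality $z_1 + (1-t) > r(t)$ (strict whenever $y > 0$).

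Next I would differentiate. Setting $f'(t) = 0$ and clearing denominators, the stationarity condition $N'(t)D(t) = N(t)D'(t)$ (with $N = z_1 + 2 - 2t$, $D = r + 1 - t$, and $r'(t) = (t-x)/r$) collapses after cancellation to the remarkably clean equation $z_1\, r(t) = a\,t + b$, where $a$ and $b$ are constants in $x,y,z_1$. Squaring and using $r(t)^2 = (x-t)^2 + y^2$ yields a quadratic in $t$. The key simplification, which I expect to be the crux of the argument, is that upon substituting $z_1^2 = (x-1)^2 + y^2$ the radicand in the statement collapses to a clean product, namely $x(x+z_1-2) + y^2 - z_1 + 1 = z_1(z_1 + x - 1)$, and the leading coefficient carries a factor of $x-1$. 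Solving the quadratic and keeping the root for which $a t + b \ge 0$ (so that the squaring step is reversible, since $z_1 r \ge 0$) produces exactly the claimed $t_1$; the other quadratic root is extraneous, arising from the spurious sign choice $z_1 r = -(at+b)$. The division by $2(x-1)$ is precisely why the case $x = 1$ must be excluded from the general formula.

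I would then verify that $t_1$ is a maximizer and justify the clamp. Because $f(1) = 1$, $f(t) > 1$ for all $t < 1$, and $f$ has a single genuine interior stationary point (the reversible root $t_1$), the ratio is unimodal with its peak at $t_1$; a sign check of $f'$ at a convenient test point confirms the change from positive to negative. Restricting to the admissible window $[0,1]$ then gives the statement: if $t_1 \ge 0$ the peak lies inside the interval and the supremum is $f(t_1)$, whereas if $t_1 < 0$ the function is decreasing on all of $[0,1]$ and the supremum is attained at the endpoint $t = 0$, i.e.\ the competitive ratio equals $\text{CR}_{\mathcal{A}_1}(\max\{t_1, 0\})$.

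Finally I would dispatch the degenerate case $x = 1$ by direct computation, since the general derivation divides by $x - 1$. With $x = 1$ we have $z_1 = y$, and the substitution $u = 1 - t$ turns $f$ into $(y + 2u)/(\sqrt{u^2+y^2} + u)$; the stationarity condition reduces to $2\sqrt{u^2+y^2} = y + 2u$, which squares to $3y^2 = 4uy$ and hence $u = 3y/4$, i.e.\ $t_1 = 1 - 3y/4$, matching the statement. The main obstacle throughout is the algebraic bookkeeping in the general case: carrying out the cancellation to reach $z_1 r = at + b$, recognizing and verifying the discriminant simplification $z_1(z_1 + x - 1)$, and exercising care to select the correct reversible root and to justify the endpoint clamp.
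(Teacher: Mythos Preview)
Your proposal is correct and follows essentially the same route as the paper: simplify $\text{Opt}(t)$ under the hypothesis, differentiate the ratio, solve the resulting equation for the critical point, discard the extraneous root, and clamp to $[0,1]$, handling $x=1$ separately. The only difference is that you supply the algebra by hand (the reduction of the stationarity condition to the linear form $z_1\,r(t)=at+b$ and the discriminant identity $x(x+z_1-2)+y^2-z_1+1=z_1(z_1+x-1)$), whereas the paper defers these computations to a Mathematica notebook and simply records the two roots $t_\pm$ before arguing that $t_+\notin[0,1]$.
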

\begin{proof}
    Note that $z_1$ is the distance from the finisher's initial location to the destination.
    Since \linebreak$\sqrt{(x-1)^2+y^2} \geq 1$, then observe $\sqrt{(x-t)^2+y^2} \geq t$ for all $0 \leq t \leq 1$.
    Thus, the delivery time for the optimal offline algorithm can be simplified (namely, the \textit{max} expression can be removed):
    \begin{align*}
        \text{Opt}(t) &= \max\left\{1, \sqrt{(x-t)^2+y^2} + 1 - t \right\}
        = \sqrt{(x-t)^2+y^2} + 1 - t .
    \end{align*}
    Thus the competitive ratio of $\mathcal{A}_1$ can be written:
    \begin{align*}
        \sup_{0 \leq t \leq 1} \text{CR}_{\mathcal{A}_1}(t)
        &= \sup_{0 \leq t \leq 1} \frac{A_1(t)}{\text{Opt}(t)} = \sup_{0 \leq t \leq 1} \frac{\sqrt{(x-1)^2+y^2}+2(1-t)}{\sqrt{(x-t)^2+y^2} + 1 - t} .
    \end{align*}
    To determine which value of $t$ maximizes $\text{CR}_{\mathcal{A}_1}(t)$, we can find the extremum of $\text{CR}_{\mathcal{A}_1}(t)$ by solving for $\frac{\partial}{\partial t} \text{CR}_{\mathcal{A}_1}(t) = 0$.
    The derivative of $\text{CR}_{\mathcal{A}_1}(t)$ with respect to $t$ is
    \begin{align*}
        &\frac{
            \left(1-\frac{t-x}{\sqrt{(t-x)^2+y^2}}\right) \left(\sqrt{(x-1)^2+y^2}+2(1-t)\right)
        }{
            \left(\sqrt{(t-x)^2+y^2}+1-t\right)^2
        } -~\frac{2}{\sqrt{(t-x)^2+y^2}+1-t} .
    \end{align*}
    Then by setting $\frac{\partial}{\partial t} \text{CR}_{\mathcal{A}_1}(t) = 0$ and solving for $t$, we get one solution when $x=1$:
    \begin{align*}
        t_{x=1} = 1- \frac{3y}{4}
    \end{align*}
    and two solutions\footnote{These solutions were derived with the aid of Mathematica (notebook available at 
    \url{https://anonymous.4open.science/r/faulty_delivery-D37F}
    )} when $x \not= 1$:
    \begin{align*}
        t_- &= \frac{x^2 + y^2 + z_1(1-x) - 1 - z_1 \sqrt{x (x+z_1-2)+y^2-z_1+1}}{2 (x-1)} \\
        t_+ &= \frac{x^2 + y^2 + z_1(1-x) - 1 + z_1 \sqrt{x (x+z_1-2)+y^2-z_1+1}}{2 (x-1)}
    \end{align*}
    Observe, though, that $t_+ \geq 1$ when $x > 1$ and $t_+ < 0$ when $x < 1$, so $t_+$ cannot be a valid solution.
    Observe also that $\text{CR}_{\mathcal{A}_1}(1) = 1$ (the algorithm is optimal if the failure occurs at $(1,0)$) and $\text{CR}_{\mathcal{A}_1}(t_-) \geq 1$ (the competitive ratio, by construction, cannot be less than $1$).
    Then, since $\text{CR}_{\mathcal{A}_1}(t)$ is continuous, $\text{CR}_{\mathcal{A}_1}(t_-)$ must be a local extremum.
    Furthermore, since $t_- \leq 1$ for all $(x, y)$, $\text{CR}_{\mathcal{A}_1}(t)$ is decreasing on the interval $[\max\left\{t_-,0\right\},1]$.
    Thus, $\text{CR}_{\mathcal{A}_1}(t)$ is maximized at $t=\max\left\{0, t_-\right\}$ and the theorem is proved.
\end{proof}

\begin{figure}[!ht]
    \centering
    \includegraphics[width=0.5\linewidth]{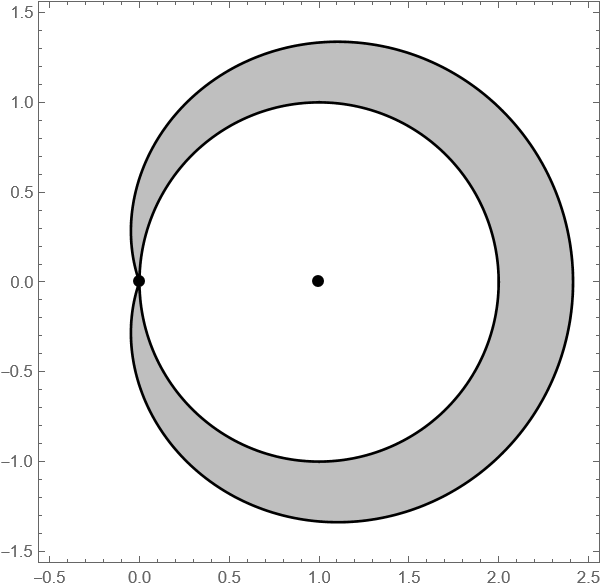}
    \caption{The worst-case fail time for the starter is $0$ \textit{except} in the gray shaded region, where the worst-case fail time is $t_-$ (from the proof of Theorem~\myref{thm:CR:A1}). Recall that we only consider $\mathcal{A}_1$ when the finisher starts outside of the Disk $D(1,1)$.}
    \mylabel{fig:A1_fail_at_0}
\end{figure}
There are essentially two cases that drive the competitive ratio of Algorithm $\mathcal{A}_1$: when $t_1$ (from the statement/proof of Theorem~\myref{thm:CR:A1}) is less than or equal to $0$ and when it is greater than $0$.
Observe the value of $t_1$ is determined by the finisher's starting position $(x,y)$.
When $t_1 \leq 0$ (i.e., when the finisher starts outside of the gray region shown in Figure~\myref{fig:A1_fail_at_0}), the competitive ratio is driven by the case where the starter fails at the origin (as soon as it loses contact with the command station).
When $t_1 > 0$, though, the worst-case scenario for $\mathcal{A}_1$ is actually when the starter fails at time $t_1$ (and thus, at location $(t_1,0)$).

\subsection{Meet in the Middle}\mylabel{sec:Ad}
In this section, we derive the competitive ratio of the last candidate algorithm, $\mathcal{A}_d$, where the finisher moves from its starting position to the point $d=(x^2+y^2)/(2x)$, then towards the origin until it finds the starter with the package before completing the delivery.
Recall from Lemma~\myref{lm:Ad_vs_AT} that we need only consider algorithm $\mathcal{A}_d$ inside the closed disk centered at $(1,0)$ with radius $1$, namely $\overline{D}(1,1)$.
Furthermore, observe that on the edge of this disk $d=1$ and so algorithms $\mathcal{A}_d$ and $\mathcal{A}_1$ are equal.
By construction, then, the finisher \textit{must} find the starter between the origin and the point $(d,0)$ since $(d,0)$ is the unique point on $\overline{ST}$ such that the distance between the origin and $(d,0)$ is equal to the distance from $(x,y)$ to $0$ (i.e., it is the point where the two drones would meet simultaneously if the starter does not fail). It's easy to derive that $d=(x^2+y^2)/(2x)$ by solving $\sqrt{\left(x-d\right)^2 + y^2} = d$ for $d$ using simple algebra.
Thus, at time $d$, the finisher reaches point $(d,0)$ and the starter cannot have reached any point further than $(d,0)$.

\begin{theorem}\mylabel{thm:CR:Ad}
    Assume $(x,y) \in \overline{D}(1,1)$ (i.e., $\sqrt{(x-1)^2+y^2} \leq 1)$.
    The competitive ratio of $\mathcal{A}_d$ is
    \begin{align*}
        \text{CR}_{\mathcal{A}_d} = \begin{cases}
            \frac{x^2+y^2+x}{x\left(1+\sqrt{x^2+y^2}\right)} & \text{if } (x,y) \in \overline{D}(1/2,1/2) \\
            1+\frac{y^2}{x\left(\sqrt{x}+1\right)^2} & \text{otherwise.}
        \end{cases} 
    \end{align*}
\end{theorem}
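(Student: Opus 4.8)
The plan is to reduce $\text{CR}_{\mathcal{A}_d}$ to a one-dimensional optimization and locate the worst-case fail time. First I would write down $A_d(t)$ explicitly. Because $d$ is defined so that $\sqrt{(x-d)^2+y^2}=d$, the finisher reaches $(d,0)$ exactly at time $d$; hence $A_d(t)=1$ for $t\ge d$ (the hand-off happens at $(d,0)$ and the package is carried to $T$), and $A_d(t)=2d+1-2t$ for $t<d$ (go to $(d,0)$, back to $(t,0)$, then forward to $T$). Combining this with the fact, already shown in the proof of Lemma~\myref{lm:Ad_vs_AT}, that the $\max$ in $\text{Opt}(t)$ drops for $t<d$, the ratio is $1$ for $t\ge d$ and equals
\[ f(t)=\frac{2d+1-2t}{\sqrt{(x-t)^2+y^2}+1-t} \]
for $t<d$. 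So $\text{CR}_{\mathcal{A}_d}=\max_{0\le t\le d}f(t)$, and the theorem reduces to maximizing $f$ on $[0,d]$ (the degenerate case $y=0$ is handled separately, where $f\equiv 1$ and both formulas give $1$).

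Next I would solve $f'(t)=0$. Writing $g:=\sqrt{(x-t)^2+y^2}$ and using the identity $g-(x-t)=y^2/\big(g+(x-t)\big)$, the stationarity condition, after clearing denominators and dividing by $y^2>0$, collapses to the clean relation $x^2+y^2-x=2t\,(g+x-t)$. Setting $k:=x^2+y^2-x$ and squaring against $g^2=(x-t)^2+y^2$ yields the quadratic $4x(x-1)t^2-4xk\,t+k^2=0$, whose discriminant simplifies to $16xk^2$, giving roots $t_\mp=k/\big(2\sqrt{x}(\sqrt{x}\pm1)\big)$. I expect the main obstacle to be exactly this algebra: carrying the derivative through by hand is error-prone, and the cancellation $x^2-x(x-1)=x$ inside the discriminant—which is what produces the $\sqrt{x}$ in the final answer—is easy to overlook, so the machine assistance used elsewhere in the paper is warranted here as well.

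Finally I would identify the genuine maximizer and split into cases. Back-substitution shows $t_-=k/\big(2\sqrt{x}(\sqrt{x}+1)\big)$ forces $g=t+\sqrt{x}\ge 0$, whereas $t_+$ forces $g=t-\sqrt{x}$, negative on the relevant range, hence an extraneous root of the squared equation; equivalently, every genuine interior stationary point must satisfy $k=2t(g+x-t)$ with $g+x-t>0$, so $t$ and $k$ share a sign. Since $(x,y)\in\overline{D}(1/2,1/2)\iff k\le 0$, this yields the dichotomy. If $k\le 0$, the relation $k=2t(g+x-t)$ has no solution with $t>0$, so $f$ has no interior stationary point and is monotone on $[0,d]$; as $f(d)=1\le f(0)$, the maximum is at $t=0$, giving $f(0)=\tfrac{x^2+y^2+x}{x(1+\sqrt{x^2+y^2})}$. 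If $k>0$, then $t_-\in(0,d)$ (one checks $t_-<\tfrac{k}{2x}<d$ and $f(t_-)>1=f(d)$, forcing $f$ to rise then fall, so $t_-$ is the global maximizer), and substituting $g=t_-+\sqrt{x}$ into $f$ and simplifying via $k+x-x^2=y^2$ collapses $f(t_-)$ to $1+\tfrac{y^2}{x(\sqrt{x}+1)^2}$. Verifying that the two expressions coincide on the boundary $k=0$ confirms continuity and completes the proof.
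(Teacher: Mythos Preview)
Your proposal is correct and follows the same overarching strategy as the paper --- reduce to the one-variable maximization of $f(t)=A_d(t)/\text{Opt}(t)$ on $[0,d]$, locate the critical point, and split according to the sign of $k=x^2+y^2-x$ --- but your execution is noticeably more elementary. The paper leaves $A_d(t)$ in the form $\sqrt{(x-d)^2+y^2}+(d-t)+(1-t)$, differentiates with Mathematica, passes through an auxiliary parameter $b=\frac{y^2-x(x-1)}{y^2+x(x-1)}$ to reach $(bt+x)^2=(x-t)^2+y^2$, simplifies the resulting roots again with Mathematica, and then invokes a CAS-verified second-derivative test to certify that $t'=k/(2(x+\sqrt x))$ is a local maximum. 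You instead (i) collapse the numerator to $2d+1-2t$ using the defining relation $\sqrt{(x-d)^2+y^2}=d$, (ii) obtain the clean stationarity condition $k=2t(g+x-t)$ by hand via the conjugate identity, (iii) discard the spurious root $t_+$ by back-substitution ($g=t_+-\sqrt x$ is inconsistent with $g>0$), and (iv) bypass the second-derivative test entirely by noting that $f\ge 1$ with $f(d)=1$, so with a unique genuine interior critical point $f$ must rise then fall. The payoff is a self-contained argument with no symbolic-computation appeals; the paper's version is shorter to write down but leans on Mathematica at three separate steps. The final substitution $g=t_-+\sqrt x$ that you use to evaluate $f(t_-)$ is also cleaner than the paper's direct plug-in.
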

\begin{appendixonly}
\begin{proof}
    We will derive the competitive ratio by finding the fail time $t$ that maximizes $\text{CR}_{\mathcal{A}_d}(t)$.
    Recall that $(d,0)$ is the location the finisher moves to in the $A_d$ algorithm (where $d = \frac{x^2+y^2}{2x}$).
    Observe that when $t \geq d$, the starter fails only \textit{after} it has met up with the finisher (and so the package delivery time is optimal and the competitive ratio is $1$).
    For this reason we need only consider fail times $t < d$.
    Observe also that,
    \begin{align*}
        t &< d = \frac{x^2+y^2}{2x} \\
        2xt &< x^2+y^2 \\
        0 &< x^2 - 2xt + y^2 \\
        t^2 &< x^2 - 2xt + t^2 + y^2 \\
        t^2 &< (x-t)^2 + y^2 \\
        0 &< \sqrt{(x-t)^2 + y^2} - t
    \end{align*}
    which means we can simplify 
    $$\text{Opt}(t) = \max\left\{1, \sqrt{(x-t)^2+y^2} + 1 - t \right\}$$
    and remove the max term, so 
    $$\text{Opt}(t) = \sqrt{(x-t)^2+y^2} + 1 - t.$$
    Then the competitive ratio of Algorithm $\mathcal{A}_d$ can be written:
    \begin{align*}
        \text{CR}_{\mathcal{A}_d} &= \sup_{0 \leq t < d} \text{CR}_{\mathcal{A}_d}(t) \\
        &= \sup_{0 \leq t < d} \vphantom{\rule{0pt}{5ex}} \frac{A_d(t)}{\text{Opt}(t)} \\
        &= \sup_{0 \leq t < d} \frac{\sqrt{(x-d)^2+y^2}+(d-t)+(1-t)}{\sqrt{(t-x)^2+y^2}-t+1} .
    \end{align*}
    We can find the extremum of $\text{CR}_{\mathcal{A}_d}(t)$ by solving for 
    $$\frac{\partial}{\partial t} \text{CR}_{\mathcal{A}_d}(t) = 0.$$
    The derivative of $\text{CR}_{\mathcal{A}_d}(t)$ with respect to $t$ is\footnote{We compute this derivative with the aid of Mathematica (notebook available at 
    \url{https://anonymous.4open.science/r/faulty_delivery-D37F}
    ).}
    \begin{equation}\mylabel{eq:A_d:deriv}
        \frac{\partial}{\partial t} \text{CR}_{\mathcal{A}_d}(t) =
        \frac{
            t~b_0 + b_1(\sqrt{(x-t)^2+y^2}-x)
        }{
            x~\sqrt{(x-t)^2+y^2}(\sqrt{(x-t)^2+y^2}-t+1)
        }
    \end{equation}
    where $b_0=x(x-1)-y^2$, $b_1=x(x-1)-y^2$. Then, to find where the derivative is equal to $0$, let $b=-\frac{b_0}{b_1}=\frac{y^2-x(x-1)}{y^2+x(x-1)}$.
    \begin{align}
        \frac{\partial}{\partial t} \text{CR}_{\mathcal{A}_d}(t) &= 0 
        \nonumber \\
        \frac{t~b_0 + b_1(\sqrt{(x-t)^2+y^2}-x)}{x~\sqrt{(x-t)^2+y^2}(\sqrt{(x-t)^2+y^2}-t+1)} &= 0 \nonumber \\ 
        t~b_0 + b_1(\sqrt{(x-t)^2+y^2}-x) &= 0 \nonumber \\ 
        t\frac{-b_0}{b_1} &= \sqrt{(x-t)^2+y^2}-x \nonumber \\
        (bt + x)^2 &= (x-t)^2+y^2 \mylabel{eq:A_d:deriv=0}
    \end{align}

    Solving Equation~\myeqref{eq:A_d:deriv=0} for $t$, we get the two solutions:
    \begin{align*}
        t_- = \frac{-x(b+1) - \sqrt{(b+1)((b+1)x^2+(b-1)y^2)}}{b^2-1} \\
        t_+ = \frac{-x(b+1) + \sqrt{(b+1)((b+1)x^2+(b-1)y^2)}}{b^2-1}
    \end{align*}

    Upon replacing $b$ with $\frac{y^2-x(x-1)}{y^2+x(x-1)}$ and simplifying\footnote{We simplified these with the aid of Mathematica (notebook available at 
    \url{https://anonymous.4open.science/r/faulty_delivery-D37F}
    ).}, we get the solutions
    \begin{align*}
        t_- = \begin{cases} 
            \frac{x(x-1) + y^2}{2(x-\sqrt{x})} & \text{if } x^2+y^2 > x \\
            \frac{x(x-1) + y^2}{2(x+\sqrt{x})} & \text{otherwise}
        \end{cases} \\
        t_+ = \begin{cases} 
            \frac{x(x-1) + y^2}{2(x-\sqrt{x})} & \text{if } x^2+y^2 \leq x \\
            \frac{x(x-1) + y^2}{2(x+\sqrt{x})} & \text{otherwise}
        \end{cases}
    \end{align*}

    Observe, then, in both regions $x^2+y^2 > x$ and $x^2+y^2 \leq x$ (inside and outside of $D(1/2,1/2)$), $\text{CR}_{\mathcal{A}_d}(t)$ has a local maximum \textit{either} at $t^\prime=\frac{x(x-1) + y^2}{2(x + \sqrt{x})}$ or at  $t^{\prime\prime} = \frac{x(x-1) + y^2}{2(x - \sqrt{x})}$.
    We will use the following two claims to show that an upper bound on the competitive ratio for $\mathcal{A}_d$ is given by $t=t^\prime$ whenever $t^\prime > 0$ and $t=0$ otherwise.
    
    \vspace{\baselineskip}
    
    \begin{smallclaim}{$CR_{\mathcal{A}_d}(t^\prime)$ is a local maximum.}\mylabel{claim:Ad:t0localmax}
        Follows by second derivative test\footnote{See the Mathematica notebook available at 
        \url{https://anonymous.4open.science/r/faulty_delivery-D37F}
        for a detailed walk-through}, namely 
        $$CR_{\mathcal{A}_d}^{''}(t^\prime) < 0$$
        for all $x,y \geq 0$.
    \end{smallclaim}
    
    \begin{smallclaim}{$t^\prime < 1$.}\mylabel{claim:Ad:t0lessthan1}
        Since $(x,y) \in D(1,1)$, then $y^2 < 1 - (x-1)^2$ and since $t^\prime$ is increasing with respect to $y$
        \begin{align*}
            t^\prime &= \frac{x(x-1) + y^2}{2(x + \sqrt{x})} < \frac{x(x-1) + (1-(1-x)^2)}{2(x + \sqrt{x})} = \frac{1}{2 + 2/\sqrt{x}}.
        \end{align*}
        Furthermore, since $\frac{1}{2 + 2/\sqrt{x}}$ is increasing with respect to $x$ and $x<2$ (again, by the assumption that $(x,y) \in D(1,1)$), then 
        \begin{align*}
            \frac{1}{2 + 2/\sqrt{x}} < \frac{1}{2 + 2/\sqrt{2}} < 1
        \end{align*}
        and thus, $t^\prime < 1$.
    \end{smallclaim}

    Since $\text{CR}_{\mathcal{A}_d}(t)$ is decreasing on $(t^\prime, 1]$ (by Claim~\myref{claim:Ad:t0localmax}) and $CR_{\mathcal{A}_d}(t)$ is continuous on $t \leq 1$, then $\text{CR}_{\mathcal{A}_d}(t^\prime)$ is a local maximum on $[0,1]$ (and thus $\text{CR}_{\mathcal{A}_d}(t^{\prime\prime})$ is \textit{not} a local maximum on $[0,1]$).
    Furthermore, since $t^\prime < 1$ (by Claim~\myref{claim:Ad:t0lessthan1}), $\text{CR}_{\mathcal{A}_d}(t)$ is maximized at $t=t^\prime$ whenever $t^\prime > 0$ and $t=0$ otherwise.
    Plugging these values into $\text{CR}_{\mathcal{A}_d}(t)$ yields the value stated in the theorem.
\end{proof}
\end{appendixonly}

Similar to Algorithm $\mathcal{A}_1$, two cases drive Algorithm $\mathcal{A}_d$'s competitive ratio.
The first is when $t^\prime$ (from the proof of Theorem~\myref{thm:CR:Ad}) is less than $0$, whenever the starting position of the finisher is inside the disk $\overline{D}(1/2,1/2)$.
In this case, the worst-case failure time for the starter is $t=0$ (i.e., at the origin).
When the finisher starts outside of the disk $\overline{D}(1/2,1/2)$
(but inside the disk $\overline{D}(1,1)$, of course, since we only consider Algorithm $\mathcal{A}_d$ in this region), however, the worst-case failure time for the starter is $t=t^\prime$ (i.e., at location $(t^\prime,0)$, see Figure~\myref{fig:Ad_fail_at_0}).
\begin{figure}[h!]
    \centering
    \includegraphics[width=0.5\linewidth]{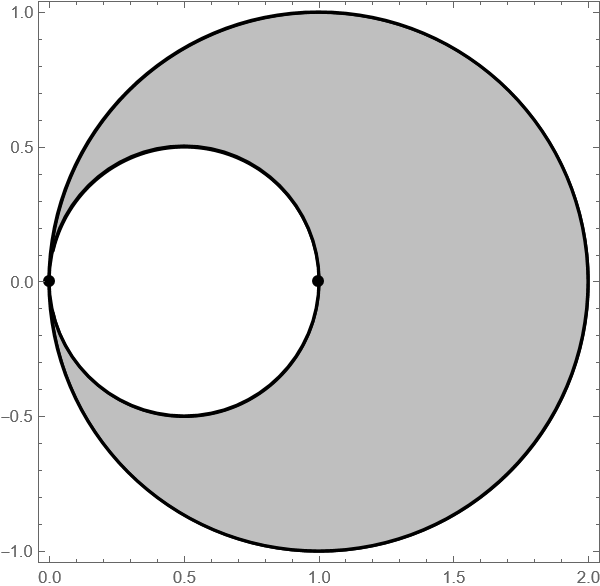}
    \caption{The worst-case fail time for the starter is $0$ \textit{except} in the gray shaded region, where the worst-case fail time is $t^\prime$ (from the proof of Theorem~\myref{thm:CR:Ad}). Recall that we only consider $\mathcal{A}_d$ when the finisher starts inside the Disk $\overline{D}(1,1)$.}\mylabel{fig:Ad_fail_at_0}
    \vspace{2em}
\end{figure}

\section{A Hybrid Algorithm}\mylabel{sec:hybrid}
For convenience, we summarize our main result by introducing Algorithm~\myref{alg:hybrid}, a hybrid algorithm which simply executes the best of $\mathcal{A}_0$, $\mathcal{A}_1$, and $\mathcal{A}_d$ given a finisher starting position $(x,y)$, and prove it to be optimal.

\begin{algorithm}[h!]
    \caption{A hybrid algorithm combining $\mathcal{A}_0$, $\mathcal{A}_1$, and $\mathcal{A}_d$}\mylabel{alg:hybrid}
    \begin{algorithmic}[1]
        \State \textbf{input}: Finisher starting position $(x,y)$
        \If{$(x-1)^2+y^2>1$} \Comment{The starter starts outside of $D(1,1)$}
            \If{$\text{CR}_{\mathcal{A}_0(x,y)} \leq \text{CR}_{\mathcal{A}_1(x,y)}$}
                \State Execute Algorithm $\mathcal{A}_0$
            \Else
                \State Execute Algorithm $\mathcal{A}_1$
            \EndIf
        \Else
            \If{$\text{CR}_{\mathcal{A}_0(x,y)} \leq \text{CR}_{\mathcal{A}_d(x,y)}$}
                \State Execute Algorithm $\mathcal{A}_0$
            \Else
                \State Execute Algorithm $\mathcal{A}_d$
            \EndIf
        \EndIf
    \end{algorithmic}
\end{algorithm}

In fact, we will show that Algorithm~\myref{alg:hybrid} is optimal by proving that any other algorithm $A_a$ which moves first to a position $(a,0)$ such that $0 < a < 1$ and $a \neq d$ is always worse than at least one of the candidate algorithms $\mathcal{A}_0$, $\mathcal{A}_1$, or $\mathcal{A}_d$.
Recall from Lemma~\myref{lm:alg_class} that we do not need to consider any other algorithms.
Leveraging Lemma~\myref{lm:Ad_vs_AT}, we know that we only need consider $\mathcal{A}_0$ and $\mathcal{A}_d$ when the finisher's starting position $(x,y)$ is inside the disk $\overline{D}(1,1)$ and algorithms $\mathcal{A}_0$ and $\mathcal{A}_1$ when it is outside the disk (recall that $\mathcal{A}_d$ and $\mathcal{A}_1$ are equivalent on the border of $\overline{D}(1,1)$).
The following lemmas cover each of these cases.
\begin{lemma}\mylabel{lm:lower_inside}
    For any $(x,y) \in \overline{D}(1, 1)$, 
    $\min\left\{\text{CR}_{\mathcal{A}_d}, \text{CR}_{\mathcal{A}_0} \right\} \leq \text{CR}_{\mathcal{A}_a}$ for all $a \in \overline{ST}$.
\end{lemma}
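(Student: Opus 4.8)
The plan is to treat every feasible online algorithm through the lens of Lemma~\ref{lm:alg_class}: such an algorithm is determined by the first point $(a,0) \in \overline{ST}$ that the finisher reaches, after which it never leaves $\overline{ST}$. Writing $\tau_a := \sqrt{(x-a)^2+y^2}$ for the time to reach $(a,0)$, the whole family is parametrized by $a \in [0,1]$, with $a=0$ giving $\mathcal{A}_0$ and $a=d$ giving $\mathcal{A}_d$ (recall $d=(x^2+y^2)/(2x) \le 1$ since $(x,y)\in\overline{D}(1,1)$). The goal is to show that the map $a \mapsto \mathrm{CR}_{\mathcal{A}_a}$, minimized over $a \in [0,1]$, attains its minimum at $a \in \{0,d\}$; this is exactly the claimed inequality. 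A useful preliminary fact is that $g(a) := \tau_a - a$ is strictly decreasing with $g(d)=0$, so the finisher reaches $(a,0)$ strictly before the still-flying starter would iff $a > d$.

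First I would handle the range $a \in [d,1]$, where sweeping from $(a,0)$ back toward the origin is feasible because the starter is guaranteed to lie in $[0,a]$ at the moment the finisher arrives. Here I would derive the delivery time in closed form: for fail times large enough that the finisher meets the live starter the algorithm is optimal, $A_a(t)=1$, while for smaller $t$ one gets the linear expression $A_a(t)=\tau_a+a+1-2t$. Exactly as in the proof of Theorem~\ref{thm:CR:Ad}, the supremum over $t$ is attained either at $t=0$ or at an interior critical point, yielding $\mathrm{CR}_{\mathcal{A}_a}$ as an explicit function of $a$. I would then differentiate this function with respect to $a$ and argue it is nondecreasing on $[d,1]$, so that $\mathrm{CR}_{\mathcal{A}_a}\ge \mathrm{CR}_{\mathcal{A}_d}$ throughout this range.

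Next I would handle $a \in [0,d)$. The key structural point is that for such $a$ the finisher arrives at $(a,0)$ after the un-failed starter has already passed it, so sweeping toward the origin no longer covers the whole segment: the starter may come to rest anywhere in $(a,1]$, forcing any feasible algorithm to retraverse the uncovered portion and pay for the backtracking. I would quantify this to produce a lower bound on $A_a(t)$ at the worst-case $t$, and compare it against the constant time $A_0(t)=1+\sqrt{x^2+y^2}$ of $\mathcal{A}_0$ and against $A_d(t)$. The aim is a worst-case domination giving $\mathrm{CR}_{\mathcal{A}_a}\ge\min\{\mathrm{CR}_{\mathcal{A}_0},\mathrm{CR}_{\mathcal{A}_d}\}$ on $[0,d)$, again via a monotonicity-in-$a$ estimate. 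Combining the two ranges yields $\inf_{a\in[0,1]}\mathrm{CR}_{\mathcal{A}_a}\in\{\mathrm{CR}_{\mathcal{A}_0},\mathrm{CR}_{\mathcal{A}_d}\}$, which is the lemma.

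I expect the main obstacle to be the monotonicity-in-$a$ arguments. Because $\mathrm{CR}_{\mathcal{A}_a}$ is itself a supremum over $t$ whose maximizer $t^\star(a)$ drifts with $a$, differentiating it cleanly calls for an envelope argument (differentiate at fixed $t^\star$, since $\partial_t \mathrm{CR}=0$ there) together with a check that the active critical point, rather than the boundary $t=0$, governs the comparison; the resulting expressions involve the same heavy radicals seen in Theorems~\ref{thm:CR:A1} and~\ref{thm:CR:Ad} and will likely need symbolic verification. A secondary subtlety is making the $a<d$ case airtight: one must rule out that some cleverer two-directional sweep from $(a,0)$ beats both $\mathcal{A}_0$ and $\mathcal{A}_d$, and this is precisely where the observation that the finisher arrives \emph{late} (so the segment is only partially covered) does the real work.
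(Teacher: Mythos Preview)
Your plan has a real gap in the $a<d$ range, and it also makes the $a\ge d$ range harder than it needs to be.

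For $a\ge d$ the paper does \emph{not} differentiate $a\mapsto \mathrm{CR}_{\mathcal{A}_a}$ or invoke an envelope argument. It simply evaluates $\mathcal{A}_a$ at the single adversarial fail time $t_d$ (the maximizer for $\mathcal{A}_d$) and uses the triangle inequality $\sqrt{(x-a)^2+y^2}+(a-t_d)\ge \sqrt{(x-d)^2+y^2}+(d-t_d)$ to get $\mathrm{CR}_{\mathcal{A}_a}\ge \mathrm{CR}_{\mathcal{A}_a}(t_d)\ge \mathrm{CR}_{\mathcal{A}_d}(t_d)=\mathrm{CR}_{\mathcal{A}_d}$ in one line. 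No need to locate the maximizer of $\mathrm{CR}_{\mathcal{A}_a}(t)$ or to differentiate anything; plugging in the \emph{right} $t$ already suffices. Your monotonicity route would work in principle but buys nothing and costs the radical-heavy computation you anticipate.

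For $a<d$ your framing breaks down, and this is the substantive gap. There is no single algorithm $\mathcal{A}_a$ to speak of: once the finisher lands at $(a,0)$ the package may lie on either side, so the subsequent trajectory is a genuine choice, and ``monotonicity in $a$'' is not even well-defined until you fix that choice. The paper resolves this not by bounding a single function of $a$ but by a case split on the \emph{later} behavior of the trajectory: (i) if the finisher visits $S$ before $T$, set $t=1$ and dominate by $\mathcal{A}_0$; (ii) if it visits $T$ before $S$, further split on whether it reaches $t_d$ before or after $T$, and in each subcase pick an adversarial fail time ($t_d$ in one subcase, $t=0$ in the other) that forces at least $\mathrm{CR}_{\mathcal{A}_d}$. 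The last subcase does require a nontrivial algebraic computation comparing $(3+\sqrt{x})/(\sqrt{x^2+y^2}+1)$ against $1+y^2/\bigl(x(\sqrt{x}+1)^2\bigr)$, but the point is that the decomposition is by \emph{sweep direction and order of visitation}, not by the scalar $a$. Your proposal acknowledges the ``cleverer two-directional sweep'' issue as a secondary subtlety, but it is actually the crux, and until you commit to a case analysis on the post-$(a,0)$ trajectory you do not have a proof.
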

\begin{appendixonly}
\begin{proof}
We will prove the lemma by examining various cases.
Let $t_d$ denote the worst-case starter fail time for algorithm $\mathcal{A}_d$ for the finisher starting position $(x,y)$.
Recall that $(a,0)$, as defined above, is the first point on $\overline{ST}$ that the finisher reaches by executing algorithm $\mathcal{A}_a$.

\noindent\textbf{Case 1}: $a \geq d$. Consider the case where the starter fails at $(t_d, 0)$. Then the competitive ratio of the algorithm is:
\begin{align*}
    \text{CR}_{\mathcal{A}_a} &\geq \frac{\sqrt{(x-a)^2 + y^2} + (a-t_d)+(1-t_d)}{\max\left\{1,\sqrt{(x-t_d)^2+y^2} + 1 - t_d\right\}}  \\
         &\geq \frac{\sqrt{(x-d)^2 + y^2} + (d-t_d)+(1-t_d)}{\max\left\{1,\sqrt{(x-t_d)^2+y^2} + 1 - t_d\right\}}\\
         &= \text{CR}_{\mathcal{A}_d} .
\end{align*}

\noindent\textbf{Case 2}: $a < d$ and finisher visits $S$ before visiting $T$. Set failure to be $T$ then $\text{CR}_{\mathcal{A}_0} \leq \text{CR}_{\mathcal{A}_a}$. 
Indeed,
\begin{align*}
    \text{CR}_{\mathcal{A}_a} \geq \frac{\sqrt{(x-a)^2 + y^2} + a + 1}{\sqrt{(x-1)^2+y^2}} \geq \frac{\sqrt{x^2 + y^2} + 1}{\sqrt{(x-1)^2+y^2}} = \text{CR}_{\mathcal{A}_0} .
\end{align*}
\textbf{Case 3}: $a < d$ and finisher visits $1$ before visiting $0$.

\noindent\textbf{Case 3a}: finisher visits $t_d$ after visiting $1$.
Consider the case where the starter fails at time $t_d$.
Then the competitive ratio of $\mathcal{A}_a$ can be written:
\begin{align*}
    \text{CR}_{\mathcal{A}_a} &\geq \frac{\sqrt{(x-a)^2 + y^2} + (1-a) + 2(1-t_d)}{\max\left\{1,\sqrt{(x-t_d)^2+y^2} + 1 - t_d\right\}} \\
         &\geq \frac{\sqrt{(x-d)^2 + y^2} + (d-t_d)+(1-t_d)}{\max\left\{1,\sqrt{(x-t_d)^2+y^2} + 1 - t_d\right\}}\\
         &= \text{CR}_{\mathcal{A}_d} .
\end{align*}
\textbf{Case 3b}: finisher visits $t_d$ before visiting $1$.
If $t_d=0$, then $\text{CR}_{\mathcal{A}_0} \leq \text{CR}_{\mathcal{A}_a}$ by Case $1$.
Otherwise, consider the case where $t=0$ (i.e., the starter fails at the origin).
The competitive ratio of $\mathcal{A}_a$, then, is:
\begin{align*}
    \text{CR}_{\mathcal{A}_a }
    &\geq \frac{\sqrt{(x-a)^2+y^2}+|a-t_d|+3-t_d}{\sqrt{x^2+y^2}+1} \\
    \text{CR}_{\mathcal{A}_a } &\geq \frac{\sqrt{(x-t_d)^2+y^2}+3-t_d}{\sqrt{x^2+y^2}+1}
\end{align*}
Then, by substituting $t_d = \frac{x(x-1)+y^2}{2(\sqrt{x} + x)}$ (this is the worst-case fail time for algorithm $\mathcal{A}_d$ whenever $t_d > 0$, see the proof for Theorem~\myref{thm:CR:Ad}), we get
\begin{align*}
    &\frac{\sqrt{(x-t_d)^2+y^2}+3-t_d}{\sqrt{x^2+y^2}+1} \\
    &= \frac{
        \sqrt{\left(x-\frac{(x-1) x+y^2}{2\left(x+\sqrt{x}\right)}\right)^2+y^2}
        + 3 - \frac{(x-1) x+y^2}{2 \left(x+\sqrt{x}\right)}
    }{\sqrt{x^2+y^2}+1} .
\end{align*}
By expanding the term under the radical in the numerator and collecting like terms, we arrive at 
\begin{align*}
    &\frac{
        \frac{
            \sqrt{y^2 4 (x+\sqrt{x})^2 + [2x(x+\sqrt{x}) - (x(x-1)+y^2)]^2}
        }{2(x+\sqrt{x})}
        + 3 - \frac{(x-1) x+y^2}{2 \left(x+\sqrt{x}\right)}
    }{\sqrt{x^2+y^2}+1}
\end{align*}
which can then be readily simplified (by continuing to multiply out and collect like terms) to: 
\begin{align*}
    \frac{
        \frac{
            \sqrt{y^2 4 (x+\sqrt{x})^2 + [(x+\sqrt{x})^2-y^2]^2}
        }{2(x+\sqrt{x})}
        + 3 - \frac{(x-1) x+y^2}{2 \left(x+\sqrt{x}\right)}
    }{\sqrt{x^2+y^2}+1} .
\end{align*}
Then, by factoring the expression under the radical in the numerator, we can simplify further
\begin{align}
    &\frac{
        \frac{
            \sqrt{4(x+\sqrt{x})^2 y^2 + ((x+\sqrt{x})^2)^2 - 2 (x+\sqrt{x})^2 y^2 + y^4}
        }{2(x+\sqrt{x})}
        + 3 - \frac{(x-1) x+y^2}{2 \left(x+\sqrt{x}\right)}
    }{\sqrt{x^2+y^2}+1} \nonumber \\
    &~~~~~~~~=\frac{
        \frac{
            \sqrt{(y^2+(x+\sqrt{x})^2)^2}
        }{2(x+\sqrt{x})}
        + 3 - \frac{(x-1) x+y^2}{2 \left(x+\sqrt{x}\right)}
    }{\sqrt{x^2+y^2}+1} \nonumber \\
    &~~~~~~~~=\frac{
        \frac{y^2+(x+\sqrt{x})^2}{2(x+\sqrt{x})}
        + 3 - \frac{(x-1) x+y^2}{2 \left(x+\sqrt{x}\right)}
    }{\sqrt{x^2+y^2}+1}  \nonumber
\end{align}
and finally arrive at the final, rather simple, expression:
\begin{align}
    &\frac{
        3 + \frac{y^2 + x^2 + 2x\sqrt{x} + x - (x^2 - x + y^2)}{2(x + \sqrt{x})}
    }{\sqrt{x^2+y^2}+1} =\frac{3+\sqrt{x}}{\sqrt{x^2+y^2}+1} . \mylabel{eq:inner_cra_upper:1}
\end{align}
Since $t_d > 0$, then $\text{CR}_{\mathcal{A}_d} = 1 + \frac{y^2}{x(\sqrt{x}+1)^2}$ by Theorem~\myref{thm:CR:Ad} and we have
\begin{align}
    \text{CR}_{\mathcal{A}_a} 
    &\geq \frac{3+\sqrt{x}}{\sqrt{x^2+y^2}+1} \mylabel{eq:inner_cra_upper:2}\\ 
    &\geq \frac{3+\sqrt{x}}{\sqrt{2x}+1} \mylabel{eq:inner_cra_upper:3}\\
    &\geq \frac{3+2\sqrt{x}}{(1+\sqrt{x})^2} \mylabel{eq:inner_cra_upper:4}\\
    &\geq 1 + \frac{y^2}{x(\sqrt{x}+1)^2}  \mylabel{eq:inner_cra_upper:5}
    = \text{CR}_{\mathcal{A}_d} 
\end{align}
where~\myeqref{eq:inner_cra_upper:2} follows from~\myeqref{eq:inner_cra_upper:1}, \myeqref{eq:inner_cra_upper:3} and~\myeqref{eq:inner_cra_upper:5} follow since $x \in D(1,1)$ (and thus $y^2 \leq 1 - (x-1)^2$ and $x > 0$), and~\myeqref{eq:inner_cra_upper:4} is easy to see by expanding and simplifying: 
\begin{align*}
    \frac{3+\sqrt{x}}{\sqrt{2x}+1} &\geq \frac{3+2\sqrt{x}}{(1+\sqrt{x})^2} \\
    x^{3/2}+5 x+7 \sqrt{x}+3 &\geq 2 \sqrt{2} x+3 \sqrt{2} \sqrt{x}+2 \sqrt{x}+3 \\
    \sqrt{x} \left(x+\left(5-2 \sqrt{2}\right) \sqrt{x}+5-3\sqrt{2}\right) &\geq 0
\end{align*}
which is trivially true since $x>0$, $5 > 2 \sqrt{2}$, and $5\geq 3 \sqrt{2}$.
\end{proof}
\end{appendixonly}

Again, by Lemma~\myref{lm:Ad_vs_AT}, when the finisher starts outside of the disk $D(1,1)$, we need only consider $\mathcal{A}_0$ and $\mathcal{A}_1$ (recall that, on the edge of the disk $\overline{D}(1,1)$, $\mathcal{A}_1$ and $\mathcal{A}_d$ are equivalent).
\begin{lemma}\mylabel{lm:lower_outside}
    For any $(x,y) \not\in D(1, 1)$, either $\text{CR}_{\mathcal{A}_1} \leq \text{CR}_{\mathcal{A}_a}$ or $\text{CR}_{\mathcal{A}_0} \leq \text{CR}_{\mathcal{A}_a}$ for all $a \in \overline{ST}$.
\end{lemma}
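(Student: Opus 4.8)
The plan is to mirror the structure of the proof of Lemma~\myref{lm:lower_inside}, with the roles of $\mathcal{A}_d$, the point $d$, and the worst-case time $t_d$ played instead by $\mathcal{A}_1$, the destination $1$, and the worst-case time $t_1$ from Theorem~\myref{thm:CR:A1}. Throughout I would use that, since $(x,y)\notin D(1,1)$, the offline optimum simplifies to $\text{Opt}(t)=\sqrt{(x-t)^2+y^2}+1-t$ (no $\max$ term), and that $a\in(0,1)$ while $d=(x^2+y^2)/(2x)>1$; in particular the analogue of Case~1 of Lemma~\myref{lm:lower_inside} ($a\ge d$) is vacuous. The whole argument then splits on the first direction the finisher searches after reaching $(a,0)$: either it reaches $S$ before $T$ (left-first) or $T$ before $S$ (right-first).

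For the left-first case I would let the starter fail at $T$ (i.e. $t=1$). The finisher must then travel at least $P\to(a,0)\to S\to T$, so its delivery time is at least $\sqrt{(x-a)^2+y^2}+a+1\ge\sqrt{x^2+y^2}+1$ by the triangle inequality, while $\text{Opt}(1)=\sqrt{(x-1)^2+y^2}$; dividing gives $\text{CR}_{\mathcal{A}_0}\le\text{CR}_{\mathcal{A}_a}$ via Theorem~\myref{thm:CR:A0}, exactly as in Case~2 there. For the right-first case I would first dispatch the easy sub-case in which $t_1\le 0$ or $a>t_1$: here the finisher reaches $T$ and then backtracks to the relevant failure point, so letting the starter fail at $\max\{t_1,0\}$ (the worst time for $\mathcal{A}_1$) the delivery time is at least $\sqrt{(x-a)^2+y^2}+(1-a)+2(1-\max\{t_1,0\})$, and the triangle inequality $\sqrt{(x-a)^2+y^2}+(1-a)\ge\sqrt{(x-1)^2+y^2}$ yields $\text{CR}_{\mathcal{A}_1}\le\text{CR}_{\mathcal{A}_a}$, matching Case~3a.

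The remaining case, right-first with $t_1>0$ and $a\le t_1$, is the crux and corresponds to Case~3b. Here I would let the starter fail at the origin, forcing the traversal $P\to(a,0)\to T\to S\to T$ and hence $\text{CR}_{\mathcal{A}_a}\ge(\sqrt{(x-a)^2+y^2}+3-a)/(\sqrt{x^2+y^2}+1)$. This bound is decreasing in $a$, so it is weakest at $a=t_1$, reducing the goal to showing that value is at least $\text{CR}_{\mathcal{A}_1}$. The key simplification is to extract, from the first-order condition $\frac{\partial}{\partial t}\text{CR}_{\mathcal{A}_1}(t)=0$ at $t=t_1$ (written out in the proof of Theorem~\myref{thm:CR:A1}), the clean identity $\text{CR}_{\mathcal{A}_1}=2r/(r+x-t_1)$ with $r=\sqrt{(x-t_1)^2+y^2}$; this avoids ever substituting the unwieldy closed form of $t_1$. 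The goal then becomes the single inequality
\[
\left(r+3-t_1\right)\left(r+x-t_1\right)\ \ge\ 2r\left(\sqrt{x^2+y^2}+1\right),
\]
to be verified for all $(x,y)\notin D(1,1)$ with $t_1\in(0,1)$.

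I expect this last inequality to be the main obstacle. Unlike the $\mathcal{A}_d$ case, where substituting $t_d$ turned the radical into a perfect square and collapsed the bound to $(3+\sqrt{x})/(\sqrt{x^2+y^2}+1)$, here the radical $r$ does not simplify, so the verification must instead exploit the constraints $(x-1)^2+y^2\ge 1$ and $0<t_1<1$ directly (after clearing the remaining radical $\sqrt{x^2+y^2}$), most plausibly with the computer-algebra assistance used elsewhere in the paper. Degenerate configurations (e.g. $y=0$, where $r+x-t_1$ can vanish) I would treat separately or by continuity.
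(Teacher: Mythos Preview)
Your overall case split (left-first vs.\ right-first) matches the paper, and your left-first case and the ``easy'' right-first sub-case are essentially the paper's Case~1 and Case~2a. There is a small gap, however: after reaching $(a,0)$ the finisher may still wander left to some $b<a$ before heading to $T$, so the correct splitting variable is not $a$ but $b$, the leftmost point visited prior to reaching $T$. If $b\le t_1<a$ and the starter fails at $t_1$, the finisher picks up the package on that leftward excursion \emph{before} ever reaching $T$, so your Case~3a bound $\sqrt{(x-a)^2+y^2}+(1-a)+2(1-t_1)$ does not apply. The paper handles this by splitting on $b>t_1$ versus $b\le t_1$.

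The substantive divergence is in the crux case. You make the adversary fail at the origin, which gives the bound $\bigl(\sqrt{(x-a)^2+y^2}+3-a\bigr)/\bigl(\sqrt{x^2+y^2}+1\bigr)$ and then forces you into the two-radical inequality $(r+3-t_1)(r+x-t_1)\ge 2r(\sqrt{x^2+y^2}+1)$, which you rightly flag as the obstacle. The paper instead lets the starter fail at $b-\epsilon$ (just to the left of the finisher's leftmost pre-$T$ point). The finisher then must traverse $P\to(a,0)\to(b,0)\to T\to(b-\epsilon,0)\to T$, and since the optimum for this failure is $\sqrt{(x-b)^2+y^2}+1-b$, one obtains (after $\epsilon\to0$ and the triangle inequality)
\[
\text{CR}_{\mathcal{A}_a}\ \ge\ \frac{\sqrt{(x-b)^2+y^2}+3(1-b)}{\sqrt{(x-b)^2+y^2}+(1-b)}.
\]
This ratio is decreasing in $b$ on $[0,t_1]$, so its minimum occurs at $b=t_1$, and with $r=\sqrt{(x-t_1)^2+y^2}$ the comparison to $\text{CR}_{\mathcal{A}_1}=\bigl(z_1+2(1-t_1)\bigr)/\bigl(r+1-t_1\bigr)$ reduces to the single triangle inequality $r+(1-t_1)\ge z_1$. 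No first-order identity, no computer algebra. The moral: in this case the right adversarial failure is not $t=0$ but $t=b^-$, because it puts $b$ in both numerator and denominator and makes the final comparison with $\text{CR}_{\mathcal{A}_1}$ a one-liner.
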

\begin{proof}
    We will prove the lemma by examining various cases.
    Let $t_1$ denote the worst-case fail time of $A_1$ for the finisher starting position $(x,y)$.
    
    \noindent\textbf{Case 1}: The finisher visits $S$ before visiting $T$.
    Then let $t=1$, and the competitive ratio of $\mathcal{A}_a$ is
    \begin{align*}
        \text{CR}_{\mathcal{A}_a} &\geq \frac{\sqrt{(x-a)^2 + y^2} + a + 1}{\max\left\{1,\sqrt{(x-1)^2+y^2}\right\}} 
        \geq \frac{\sqrt{x^2 + y^2} + 1}{\max\left\{1,\sqrt{(x-1)^2+y^2}\right\}}
        = \text{CR}_{\mathcal{A}_0} .
    \end{align*}

    \noindent\textbf{Case 2:} The finisher visits $T$ before visiting $S$. Then we break the analysis into sub-cases. Let $b$ be the smallest value such that the finisher visits $(b,0)$ before visiting $(1,0)$.
    
    \noindent\textbf{Case 2a:} $b > t_1$. Then let consider the case where the starter fails at time $t_1$. The the competitive ratio of $\mathcal{A}_a$ is
    \begin{align*}
        \text{CR}_{\mathcal{A}_a} &\geq \frac{\sqrt{(x-a)^2 + y^2} + (1-a) + 2(1-t_1)}{\max \left\{1,\sqrt{(x-t_1)^2+y^2} + (1-t_1)\right\} }  \\
        &\geq \frac{\sqrt{(x-1)^2 + y^2} + 2(1-t_1)}{\max\left\{1,\sqrt{(x-t_1)^2+y^2} + (1-t_1)\right\} } = \text{CR}_{\mathcal{A}_1} .
    \end{align*}
    
    \noindent\textbf{Case 2b:} $b \leq t_1$. Then consider the case where the starter fails at time $b-\epsilon$ (where is $\epsilon$ is arbitrarily small), and:
    \begin{align}
        \text{CR}_{\mathcal{A}_a} &\geq \lim_{\epsilon \rightarrow 0} \frac{\sqrt{(x-a)^2+y^2}+(a-b)+3(1-(b-\epsilon))}{\sqrt{(x-(b-\epsilon))^2+y^2}+1-b} \notag \\
        &\geq \lim_{\epsilon \rightarrow 0} \frac{\sqrt{(x-b)^2+y^2}+3(1-(b-\epsilon))}{\sqrt{(x-(b-\epsilon))^2+y^2}+1-b} \nonumber \\
        &\geq \frac{\sqrt{(x-b)^2+y^2}+3(1-b)}{\sqrt{(x-b)^2+y^2}+1-b} \mylabel{eq:outer_cra_upper:2} \\
        &\geq \frac{\sqrt{(x-t_1)^2+y^2}+3(1-t_1)}{\sqrt{(x-t_1)^2+y^2}+1-t_1} \mylabel{eq:outer_cra_upper:3} \\
        &\geq \frac{\sqrt{(x-1)^2+y^2}+2(1-t_1)}{\sqrt{(x-t_1)^2+y^2}+1-t_1} = \text{CR}_{\mathcal{A}_1} \notag
    \end{align}
    where $\myeqref{eq:outer_cra_upper:2}$ follows from the triangle inequality and $\myeqref{eq:outer_cra_upper:3}$ follows since $\frac{\sqrt{(x-b)^2+y^2}+3(1-b)}{\sqrt{(x-b)^2+y^2}+1-b}$ is decreasing with respect to $b$ (and thus is minimized at $b \rightarrow t_1$).
\end{proof}

\begin{theorem}
    Algorithm~\myref{alg:hybrid} is optimal.
\end{theorem}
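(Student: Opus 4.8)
The plan is to show that the competitive ratio of Algorithm~\myref{alg:hybrid} matches the best competitive ratio achievable by \emph{any} online algorithm, and this follows almost entirely by assembling the lemmas already established. The key first step is to invoke Lemma~\myref{lm:alg_class}, which guarantees that there is an optimal online algorithm that moves directly from $P=(x,y)$ to some point $M=(a,0)\in\overline{ST}$ and thereafter never leaves $\overline{ST}$; call any such algorithm $\mathcal{A}_a$. Hence it suffices to compare the hybrid against the family $\{\mathcal{A}_a : a \in \overline{ST}\}$, since some member of this family attains the optimal online competitive ratio. Note that $\mathcal{A}_0$, $\mathcal{A}_1$, and $\mathcal{A}_d$ are themselves members of this family, so the hybrid is an online algorithm of the prescribed form, and by construction its competitive ratio equals $\min\{\text{CR}_{\mathcal{A}_0}, \text{CR}_{\mathcal{A}_d}\}$ when $(x,y)\in\overline{D}(1,1)$ and $\min\{\text{CR}_{\mathcal{A}_0}, \text{CR}_{\mathcal{A}_1}\}$ when $(x,y)\notin D(1,1)$.

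Next I would split into the two regions and apply the corresponding lower-bound lemma. First, suppose $(x,y) \in \overline{D}(1,1)$. By Lemma~\myref{lm:lower_inside}, $\min\{\text{CR}_{\mathcal{A}_d}, \text{CR}_{\mathcal{A}_0}\} \le \text{CR}_{\mathcal{A}_a}$ for every $a \in \overline{ST}$, and in particular for the $a$ realizing the optimal online algorithm. Since the hybrid's competitive ratio equals this minimum, no online algorithm can beat it in this region. Symmetrically, for $(x,y) \notin D(1,1)$, Lemma~\myref{lm:lower_outside} states that for every $a$ either $\text{CR}_{\mathcal{A}_1}\le\text{CR}_{\mathcal{A}_a}$ or $\text{CR}_{\mathcal{A}_0}\le\text{CR}_{\mathcal{A}_a}$, i.e. $\min\{\text{CR}_{\mathcal{A}_0},\text{CR}_{\mathcal{A}_1}\}\le\text{CR}_{\mathcal{A}_a}$, and again the hybrid attains this bound. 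Combining the two regions, the hybrid's competitive ratio is no larger than that of any $\mathcal{A}_a$, hence of any online algorithm, which establishes optimality.

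The genuine difficulty in this argument lives entirely in the supporting lemmas rather than in the final assembly: Lemmas~\myref{lm:lower_inside} and~\myref{lm:lower_outside} already carry out the exhaustive case analysis (whether the finisher reaches $S$ or $T$ first, and where the worst-case failure time lies relative to the turning point $a$) together with the attendant algebra, so the concluding step is mostly bookkeeping. The one point requiring care is the consistency of the two regions along the boundary where $(x-1)^2+y^2=1$: there $d=1$, so $\mathcal{A}_d$ and $\mathcal{A}_1$ coincide (as noted after Lemma~\myref{lm:Ad_vs_AT}), which guarantees that the two cases agree and that the hybrid is well-defined and optimal across this seam. I would also explicitly record that the optimum over online algorithms is attained within the $\mathcal{A}_a$ family, so that the word ``optimal'' is meaningful — this is precisely the content of Lemma~\myref{lm:alg_class}.
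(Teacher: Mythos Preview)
Your proposal is correct and mirrors the paper's approach exactly: the paper's proof is the single line ``Follows directly from Lemmas~\myref{lm:lower_inside} and~\myref{lm:lower_outside},'' with the reduction to the family $\{\mathcal{A}_a\}$ via Lemma~\myref{lm:alg_class} already having been noted in the surrounding discussion. Your write-up simply unpacks this one-liner with the region split and the boundary-consistency remark made explicit.
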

\begin{proof}
    Follows directly from Lemmas~\myref{lm:lower_inside} and~\myref{lm:lower_outside}.
\end{proof}

\section{Discussion}\mylabel{sec:discussion}
\begin{figure*}[!ht]
    \centering
    \begin{subfigure}{\linewidth}
    \centering
        \includegraphics[width=0.6\linewidth]{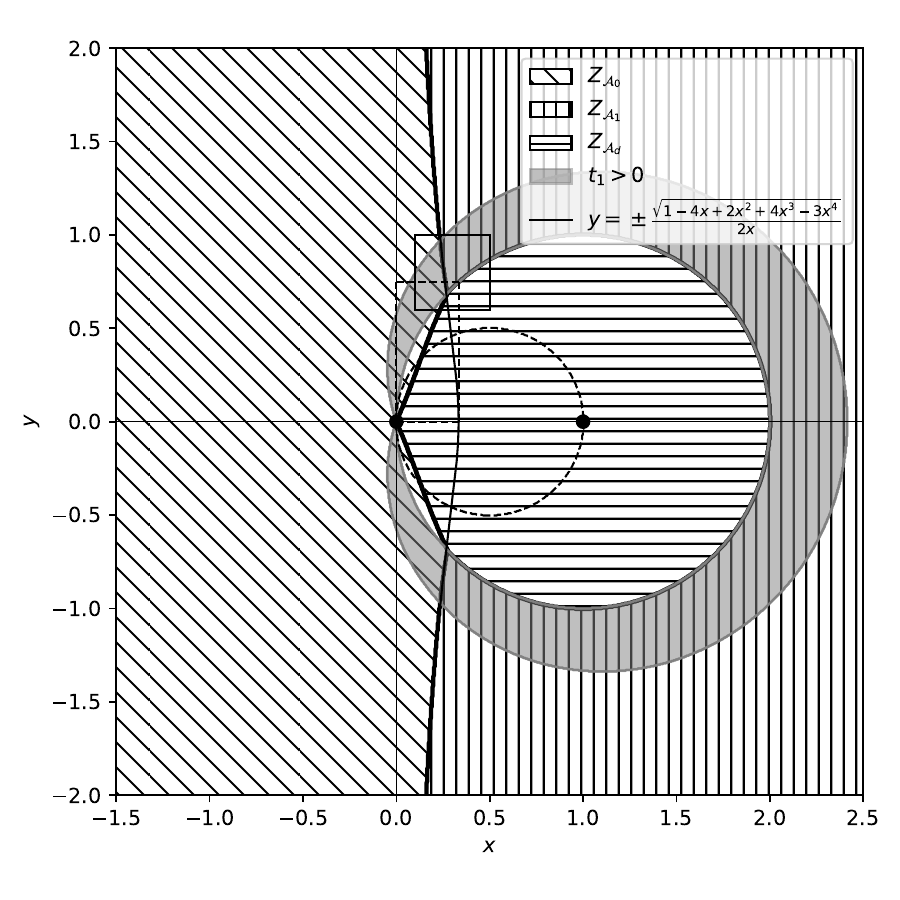}
        \caption{Plot showing which candidate algorithm has the least competitive ratio given the finisher's starting position $(x,y)$.}
        \mylabel{fig:regions:all}
    \end{subfigure}%

    \vspace{2em}%

    \begin{subfigure}{0.48\linewidth}
    \centering
      \includegraphics[trim={3.9cm 0 3.9cm 0},width=0.47\linewidth]{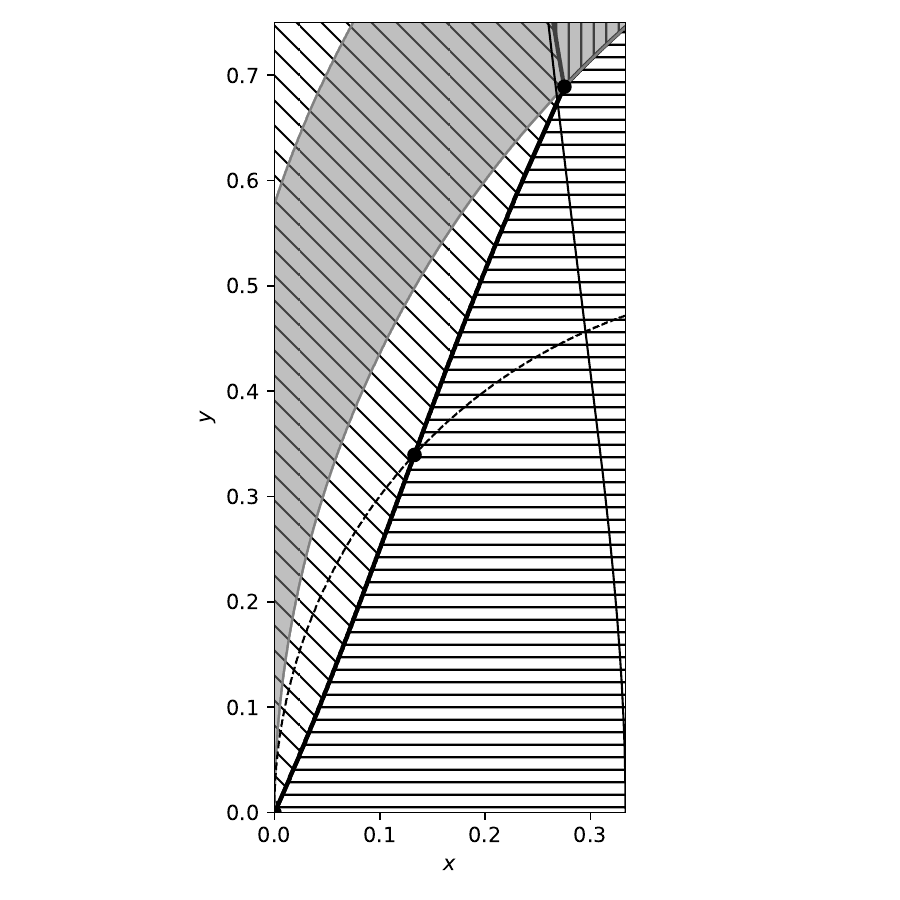}
      \caption{The region bounded by the dash-bordered rectangular box in Figure~\myref{fig:regions:all}.}
      \mylabel{fig:regions:box_1}
    \end{subfigure}%
    \hspace{0.02\linewidth}
    \begin{subfigure}{0.48\linewidth}
    \centering
      \includegraphics[width=\linewidth]{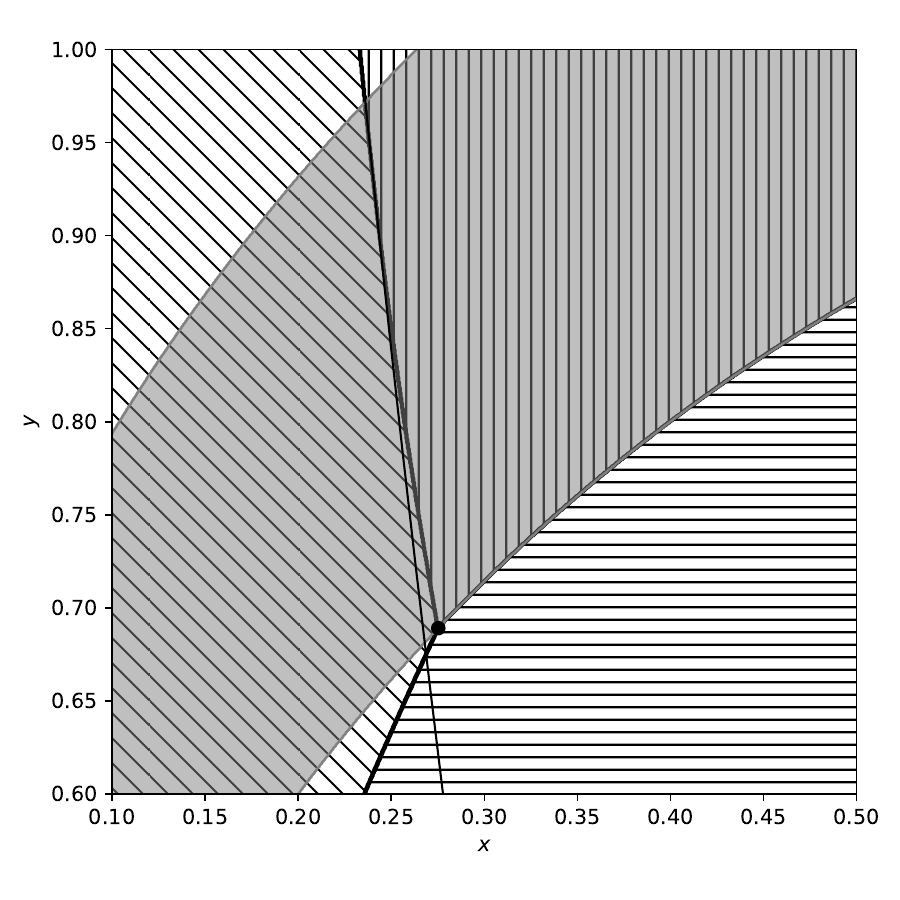}
      \caption{The region bounded by the solid-bordered square box in Figure~\myref{fig:regions:all}.
      {\color{white} xxxxx xxxxx xxxxx xxxxx xxxxx xxxxx} 
      }\mylabel{fig:regions:box_0}
    \end{subfigure}%

    \vspace{1em}%
    
  \caption{Region plots with highlighted areas}
  \mylabel{fig:regions}
\end{figure*}
The competitive ratio of each of the candidate algorithms, and therefore the hybrid algorithm, depends on the starting position of the finisher.
Let $Z_{\mathcal{A}_0}$ denote the set of points $(x,y)$ such that Algorithm~\myref{alg:hybrid} executes $\mathcal{A}_0$ (i.e., $(x-1)^2+y^2>1$ and $\text{CR}_{\mathcal{A}_0(x,y)} \leq \text{CR}_{\mathcal{A}_1(x,y)}$, or else $(x-1)^2+y^2 \leq 1$ and $\text{CR}_{\mathcal{A}_0(x,y)} \leq \text{CR}_{\mathcal{A}_d(x,y)}$).
Similarly, let $Z_{\mathcal{A}_1}$ denote the set of points $(x,y)$ such that Algorithm~\myref{alg:hybrid} executes $\mathcal{A}_1$ (i.e., $(x-1)^2+y^2>1$ and $\text{CR}_{\mathcal{A}_0(x,y)} > \text{CR}_{\mathcal{A}_1(x,y)}$) 
and $Z_{\mathcal{A}_1}$ the set of points $(x,y)$ such that Algorithm~\myref{alg:hybrid} executes $\mathcal{A}_d$ (i.e., $(x-1)^2+y^2 \leq 1$ and $\text{CR}_{\mathcal{A}_0(x,y)} > \text{CR}_{\mathcal{A}_d(x,y)}$).
In other words, for a finisher starting position $(x,y)$, the hybrid algorithm executes algorithm $\mathcal{A}_0$ iff $(x,y) \in Z_{\mathcal{A}_0}$, $\mathcal{A}_1$ iff $(x,y) \in Z_{\mathcal{A}_1}$, and $\mathcal{A}_d$ iff $(x,y) \in Z_{\mathcal{A}_d}$.
Observe the regions $Z_{\mathcal{A}_0}$, $Z_{\mathcal{A}_1}$, and $Z_{\mathcal{A}_d}$ are disjoint and their union comprises the entire plane.
Figure~\myref{fig:regions} depicts these regions.

Observe if $(x,y)$ is inside the disk $D(1,1)$, either $\mathcal{A}_0(x,y)$ or $\mathcal{A}_d(x,y)$ has the least competitive ratio.
Otherwise, if $(x,y) \not\in \overline{D}(1,1)$ either $\mathcal{A}_0(x,y)$ or $\mathcal{A}_1(x,y)$ has the least competitive ratio.
This is consistent, of course, with Lemma~\myref{lm:Ad_vs_AT}.
Inside each of these regions, there are even more interesting things going on.
Examine the bang-bang curve (i.e., the separating curve) between the regions $Z_{\mathcal{A}_0}$ and $Z_{\mathcal{A}_d}$ inside the disk $D(1/2,1/2)$.
Recall that the worst-case fail time for $\mathcal{A}_d(x,y)$ is $t_d=\frac{x(x-1)+y^2}{2(x+\sqrt{x})}$ if $(x,y)\in D(1/2,1/2)$ and $0$ otherwise.
This is what causes the change of inflection when the bang-bang curve leaves the disk $D(1/2,1/2)$.

Now, look at the bang-bang curve between $Z_{\mathcal{A}_0}$ and $Z_{\mathcal{A}_1}$  outside of the disk $D(1,1)$.
Recall the worst-case fail time for $\mathcal{A}_1(x,y)$ is
\begin{align*}
    t_1 = \begin{cases}
        1 - 3y/4 &\text{if }x=1 \\
        \frac{x^2 + y^2 + z_1(1-x) - 1 - z_1 \sqrt{x (x+z_1-2)+y^2-z_1+1}}{2 (x-1)} &\text{otherwise}
    \end{cases}
\end{align*}
(where $z_1 = \sqrt{(x-1)^2+y^2}$) if $t_1 > 0$ and $0$ otherwise.
Outside of this region, we can actually find a closed-form equation for the curve separating $\mathcal{A}_0$ and $\mathcal{A}_1$ since the competitive ratios for $\text{CR}_{\mathcal{A}_0}$ and $\text{CR}_{\mathcal{A}_1}$ are relatively simple.
Indeed, we can easily find where the competitive ratios are equal for this case:
\begin{align*}
    \text{CR}_{\mathcal{A}_1(x,y)}(0) &= \text{CR}_{\mathcal{A}_0(x,y)} \\
    \frac{\sqrt{(x-1)^2+y^2}+2}{\sqrt{x^2+y^2} + 1} &= \frac{\sqrt{x^2+y^2}+1}{\sqrt{(x-1)^2+y^2}} .
\end{align*}
By cross-multiplying the right-hand side and simplifying, we obtain:
\begin{align*}
    -2x+1 &= 2x\sqrt{x^2+y^2} + x^2
\end{align*}
From this, it is easy to arrive at:
\begin{align*}
    y &= \pm \sqrt{\left(\frac{1-2x-x^2}{2x}\right)^2 - x^2} = \pm \frac{\sqrt{1-4x+2x^2+4x^3-3x^4}}{2x}
\end{align*}
Unfortunately, this is the only bang-bang curve for which we were able to find a closed form equation for.
Observe in Figure~\myref{fig:regions:box_0} how the curve diverges from the curve given by $y=\frac{\sqrt{1-4x+2x^2+4x^3-3x^4}}{2x}$ when $t_1 > 0$ (inside the shaded region).

\section{Conclusion}\mylabel{sec:conclusion}
In this paper, we present a competitively optimal algorithm for two-agent delivery in the plane with a single faulty agent.
We show that the competitive ratio of the algorithm depends relative positioning of the two agents and the destination.
An upper bound of $3$ on the competitive ratio over all points $(x,y)$ is straightforward to prove (both $\text{CR}_{\mathcal{A}_0(x,y)}$ and $\text{CR}_{\mathcal{A}_1(x,y)}$ have a maximum of $3$).
Numerical calculations indicate the maximum competitive ratio is given by the case where $\text{CR}_{\mathcal{A}_0(x,y)} = \text{CR}_{\mathcal{A}_1(x,y)} = \text{CR}_{\mathcal{A}_d(x,y)}$, which is approximately $1.74197$ at $x \approx 0.275257, y \approx 0.689019$.
The results presented in this paper introduce a number of interesting questions and avenues for future work.
First, we assume the starter can only move directly toward the destination.
It would be interesting to see how the competitive ratio changes if the starter can move in any direction (e.g., toward the finisher).
Second, we could remove the assumption that the starter/finisher move at the same speed.
In this case, the finisher might be able to deliver the package faster by participating even if the starter doesn't fail.
Finally, we could extend the problem to consider multiple (potentially faulty) agents (i.e., what happens if the finisher also fails?).

\newpage
\bibliographystyle{plainurl}
\bibliography{main}




\end{document}